\tikzstyle{basic}=[fill=white, draw=black, shape=circle]
\tikzstyle{square}=[fill=white, draw=black, shape=rectangle]
\tikzstyle{big dashed}=[fill=white, draw=black, shape=circle, minimum width=1cm, dashed]
\tikzstyle{vertical ellipse dashed}=[fill=none, draw=blue, minimum width=0.75cm, minimum height=3cm, ellipse, dashed, tikzit shape=rectangle, tikzit draw=blue, tikzit fill=white]
\tikzstyle{small vertical ellipse dashed}=[fill=none, draw=blue, shape=circle, tikzit fill=white, tikzit draw=blue, dashed, minimum width=0.75cm, minimum height=1.5cm, tikzit shape=rectangle, ellipse]
\tikzstyle{tiny vertical ellipse dashed}=[fill=none, draw=blue, shape=circle, tikzit fill=white, ellipse, dashed, minimum width=0.75cm, minimum height=1cm, tikzit shape=rectangle]
\tikzstyle{red}=[fill={rgb,255: red,191; green,0; blue,64}, draw=black, shape=circle]
\tikzstyle{green}=[fill={rgb,255: red,0; green,128; blue,128}, draw=black, shape=circle]
\tikzstyle{blue}=[fill=blue, draw=black, shape=circle]
\tikzstyle{huge dashed}=[fill=white, draw=black, shape=circle, dashed, minimum width=2cm]
\tikzstyle{medium}=[fill=white, draw=black, shape=circle, minimum width=1cm]
\tikzstyle{pale green}=[fill={rgb,255: red,173; green,231; blue,0}, draw=black, shape=circle, minimum width=1cm]
\tikzstyle{horizontal ellipse dashed}=[fill=white, draw=black, tikzit draw=magenta, tikzit shape=rectangle, minimum width=3cm, minimum height=0.75cm, ellipse, dashed]
\tikzstyle{minsize}=[fill=white, draw=black, shape=circle, minimum width=0.75cm]
\tikzstyle{horizontal ellipse green}=[fill={rgb,255: red,191; green,255; blue,0}, draw=black, tikzit draw={rgb,255: red,191; green,255; blue,0}, tikzit shape=rectangle, minimum width=3cm, minimum height=0.75cm, ellipse, dashed]
\tikzstyle{horizontal ellipse blue}=[fill={rgb,255: red,107; green,203; blue,255}, draw=black, tikzit draw=blue, tikzit shape=rectangle, minimum width=3cm, minimum height=0.75cm, ellipse, dashed]
\tikzstyle{smallblack}=[fill=black, draw=black, shape=circle, inner sep=0 pt, minimum size=3 pt]
\tikzstyle{smallSquare}=[fill=white, draw=black, shape=rectangle, inner sep=0 pt, minimum size=6 pt]
\tikzstyle{smallCircle}=[fill=white, draw=black, shape=circle, inner sep=0 pt, minimum size=6 pt]
\tikzstyle{big vertical ellipse dashed}=[fill=none, draw=blue, shape=circle, tikzit shape=rectangle, ellipse, dashed, minimum width=0.95cm, minimum height=3.7cm]
\tikzstyle{smallred}=[fill={rgb,255: red,191; green,0; blue,64}, draw={rgb,255: red,191; green,0; blue,64}, shape=circle, inner sep=0 pt, minimum size=3 pt]
\tikzstyle{smallblue}=[fill=blue, draw=blue, shape=circle, inner sep=0pt, minimum size=3pt]
\tikzstyle{directed}=[->, line width=1pt]
\tikzstyle{undirected}=[-, line width=1pt]
\tikzstyle{directed red}=[draw=red, ->, line width=1pt]
\tikzstyle{directed green}=[draw={rgb,255: red,0; green,128; blue,128}, ->, line width=1pt]
\tikzstyle{directed blue}=[draw=blue, ->, line width=1pt]
\tikzstyle{directed purple}=[draw={rgb,255: red,128; green,0; blue,128}, ->, line width=1pt]
\tikzstyle{undirected red}=[-, draw=red, line width=1pt]
\tikzstyle{undirected green}=[-, draw={rgb,255: red,0; green,107; blue,61}, line width=1pt]
\tikzstyle{undirected blue}=[-, draw=blue, line width=1pt]
\tikzstyle{undirected purple}=[-, draw={rgb,255: red,128; green,0; blue,128}, line width=1pt]
\tikzstyle{undirected dashed}=[-, line width=1pt, dashed]
\tikzstyle{orange dashed}=[-, draw={rgb,255: red,255; green,128; blue,0}, dashed, line width=1.5pt]
\tikzstyle{directed dash}=[->, dashed]
\tikzstyle{blue dashed}=[-, draw=blue, dashed, line width=1pt]
\tikzstyle{green dashed}=[-, draw={rgb,255: red,0; green,162; blue,0}, dashed, line width=1pt]
\tikzstyle{blue filled}=[-, fill={blue!20}, draw=blue, line width=1pt, opacity=0.5, tikzit fill=white]
\tikzstyle{red filled}=[-, fill={red!20}, line width=1pt, draw=red, opacity=0.5, tikzit fill=white]
\tikzstyle{green filled}=[-, line width=1pt, draw={rgb,255: red,0; green,107; blue,61}, opacity=0.5, tikzit fill=white, fill={rgb,255: red,149; green,255; blue,179}]
\tikzstyle{orange filled}=[-, fill={orange!20}, draw=orange, line width=1pt, opacity=0.5, tikzit fill=white]
\tikzstyle{undirected dashed}=[-, draw=black, dashed, line width=1pt]
\tikzstyle{Dotted Red Directed}=[draw={rgb,255: red,191; green,0; blue,64}, dashed, line width=1.5pt, ->]
\tikzstyle{grey background}=[-, fill={rgb,255: red,217; green,217; blue,217}, draw=none, tikzit fill={rgb,255: red,191; green,191; blue,191}, tikzit draw={rgb,255: red,128; green,128; blue,128}]
\newcommand{\algpowermethod}{\textsc{PowerMethod}}
\newcommand{\algfsc}{\textsc{FastSpectralCluster}}
\newcommand{\algkmeans}{\textsc{KMeans}}
\newcommand{\partitiona}{\{\seta_i\}_{i = 1}^k}
\newcommand{\partitions}{\{\sets_i\}_{i = 1}^k}
\newcommand{\partition}[1]{\{#1_i\}_{i = 1}^k}
\newcommand{\allnotation}[1]{#1}
\newtheorem{theorem}{Theorem} [section]
\newtheorem{lemma}{Lemma} [section]
\newtheorem{remark}{Remark} [section]
\newcommand{\transpose}{\intercal}                    
\newcommand{\inner}[2]{\left\langle #1 , #2 \right\rangle} 
\newcommand{\norm}[1]{\left\| #1\right\|}                  
\newcommand{\fnorm}[1]{\norm{#1}_F}                  
\newcommand{\abs}[1]{\left\lvert#1\right\rvert}
\renewcommand{\vec}[1]{{\allnotation{\bm{#1}}}}
\newcommand{\vecf}{\vec{f}}
\newcommand{\vecx}{\vec{x}}
\newcommand{\vecy}{\vec{y}}
\newcommand{\vecz}{\vec{z}}
\newcommand{\graph}[1]{{\allnotation{#1}}}
\newcommand{\graphg}{\graph{G}}
\newcommand{\set}[1]{{\allnotation{#1}}}
\newcommand{\setv}{\set{V}}
\newcommand{\sete}{\set{E}}
\newcommand{\sets}{\set{S}}
\newcommand{\seta}{\set{A}}
\newcommand{\setb}{\set{B}}
\newcommand{\vertexset}{\setv}
\newcommand{\edgeset}{\sete}
\newcommand{\mat}[1]{{\allnotation{\mathbf{#1}}}}
\newcommand{\matb}{\mat{B}}
\newcommand{\matm}{\mat{M}}
\newcommand{\matp}{\mat{P}}
\newcommand{\matx}{\mat{X}}
\newcommand{\maty}{\mat{Y}}
\newcommand{\matf}{\mat{F}}
\newcommand{\matz}{\mat{Z}}
\newcommand{\lap}{\mat{L}}
\newcommand{\lapn}{\mat{N}}
\newcommand{\signlapn}{\mat{M}}
\newcommand{\degm}{\mat{D}}
\newcommand{\degmhalfneg}{\degm^{\allnotation{-\frac{1}{2}}}}
\newcommand{\adj}{\mat{A}}
\newcommand{\identity}{\mat{I}}
\newcommand{\cost}{\mathrm{COST}}
\newcommand{\etal}{et al.}
\renewcommand{\deg}{{\allnotation{d}}}
\newcommand{\weight}{{\allnotation{w}}}
\newcommand{\cond}{{\allnotation{\Phi}}}
\newcommand{\vol}{\mathrm{vol}}
\newcommand{\geqve}{\graphg = (\vertexset, \edgeset)} 
\newcommand{\geqvew}{\graphg = (\vertexset, \edgeset, \weight)} 
\newcommand{\bigo}[1]{O\!\left(#1\right)}
\newcommand{\bigomega}[1]{\Omega\!\left(#1\right)}
\newcommand{\bigtheta}[1]{\Theta\!\left(#1\right)}
\newcommand{\bigotilde}[1]{\widetilde{O}\!\left(#1\right)}
\newcommand{\E}{\mathrm{E}}
\newcommand{\p}{\mathrm{Pr}}
\newcommand{\R}{\mathbb{R}}
\newcommand{\Z}{\mathbb{Z}}
\newcommand{\union}{\cup}
\newcommand{\intersect}{\cap}
\newcommand{\cardinality}[1]{\abs{#1}}
\newcommand{\setcomplement}[1]{\overline{#1}}
\newcommand{\symmetricdiff}{\triangle}
\newcommand{\symdiff}{\symmetricdiff}
\newcommand{\twopartdefow}[3]
{
	\left\{
		\begin{array}{ll}
			#1 & \mbox{if } #2 \\
			#3 & \mbox{otherwise}
		\end{array}
	\right.
}
\title{Fast and Simple Spectral Clustering \\ in Theory and Practice}
\author{%
  Peter Macgregor \\
  School of Informatics\\
  University of Edinburgh\\
  \texttt{peter.macgregor@ed.ac.uk} \\
}
\begin{document}

\maketitle

\begin{abstract}
    Spectral clustering is a popular and effective algorithm designed to find $k$ clusters in a graph $G$.
    In the classical spectral clustering algorithm, the vertices of $G$ are embedded into $\mathbb{R}^k$ using $k$ eigenvectors of the graph Laplacian matrix.
    However, computing this embedding is computationally expensive and dominates the running time of the algorithm.
    In this paper, we present a simple spectral clustering algorithm based on a vertex embedding with $O(\log(k))$ vectors computed by the power method.
    The vertex embedding is computed in nearly-linear time with respect to the size of the graph, and
    the algorithm provably recovers the ground truth clusters under natural assumptions on the input graph.
    We evaluate the new algorithm on several synthetic and real-world datasets, finding that it is significantly faster than alternative clustering algorithms,
    while producing results with approximately the same clustering accuracy.
    
\end{abstract}

\section{Introduction} \label{sec:introduction}
Graph clustering is an important problem with numerous applications in machine learning, data science, and theoretical computer science.
Spectral clustering is a popular graph clustering algorithm with strong theoretical guarantees and excellent empirical performance.
Given a graph $\graphg$ with $n$ vertices and $k$ clusters, the classical spectral clustering algorithm consists of the following two high-level steps~\cite{ng2001spectral, von2007tutorial}.
\begin{enumerate}
    \item Embed the vertices of $\graphg$ into $\R^k$ according to $k$ eigenvectors of the graph Laplacian matrix.
    \item Apply a $k$-means clustering algorithm to partition the vertices into $k$ clusters.
\end{enumerate}
Recent work shows that if the graph has a well-defined cluster structure,
then the clusters are well-separated in the spectral embedding and the $k$-means algorithm will return clusters which are close to optimal~\cite{macgregorTighterAnalysisSpectral2022, pengPartitioningWellClusteredGraphs2017}.

The main downside of this algorithm is the high computational cost of computing $k$ eigenvectors of the graph Laplacian matrix.
In this paper, we address this computational bottleneck and propose a new fast spectral clustering algorithm which avoids the need to compute
eigenvectors while maintaining excellent theoretical guarantees.
Moreover, our proposed algorithm is simple, fast, and effective in practice.

\begin{figure}[t]
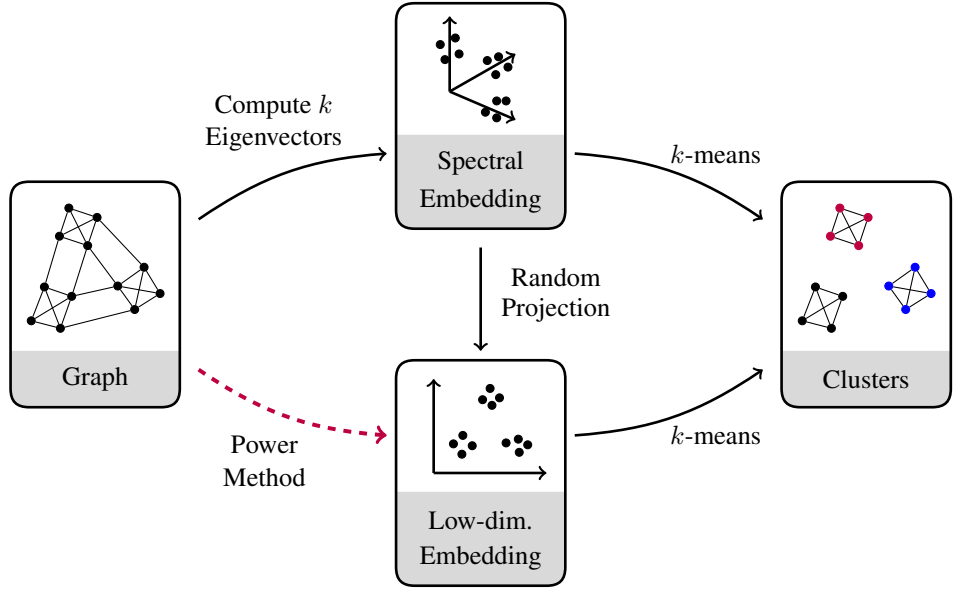

     \centering
    \tikzfig{figures/illustration}
     \caption{An illustration of the steps of the spectral clustering algorithm, and the contribution of this paper.
     We are given a graph as input. In classical spectral clustering we follow the top path: we compute the spectral embedding and apply a $k$-means algorithm to find clusters.
     Through the recent result of Makarychev~\etal~\cite{kmeansJL}, we can project the embedded points into $\bigo{\log(k)}$ dimensions and obtain approximately the same clustering.
     In this paper, we show that it is possible to compute the low-dimensional embedding directly with the power method, skipping the computationally expensive step of computing $k$ eigenvectors.
     }
     \label{fig:illustration}
\end{figure}

\subsection{Sketch of Our Approach}
In this section, we introduce the high-level idea of this paper, which is also illustrated in Figure~\ref{fig:illustration}.
We begin by considering a recent result of Makarychev~\etal~\cite{kmeansJL} who
show that
a random projection of data
into $\bigo{\log(k)}$ dimensions preserves the $k$-means objective function for all partitions of the data, with high probability.
Since the final step of spectral clustering is to apply $k$-means, we might consider the following alternative spectral clustering algorithm which will produce roughly the same output as the classical algorithm.

\begin{enumerate}
    \item Embed the vertices of $\graphg$ into $\R^k$ according to $k$ eigenvectors of the graph Laplacian matrix.
    \item Randomly project the embedded points into $\bigo{\log(k)}$ dimensions.
    \item Apply a $k$-means clustering algorithm to partition the vertices into $k$ clusters.
\end{enumerate}

Of course, this does not avoid the expensive eigenvector computation and so it is not immediately clear that this random projection can be used to improve the spectral clustering algorithm.

The key technical element of our paper is a proof that we can efficiently approximate a random projection of the spectral embedding without computing the spectral embedding itself.
For this, we use the power method, which is a well-known technique in numerical linear algebra for approximating the dominant eigenvalue of a matrix~\cite{golub2000eigenvalue}.
We propose the following simple algorithm (formally described in Algorithm~\ref{alg:fsc}).
\begin{enumerate}
    \item Embed the vertices of $\graphg$ into $\bigo{\log(k)}$ dimensions using $\bigo{\log(k)}$ random vectors computed with the power method.
    \item Apply a $k$-means clustering algorithm to partition the vertices into $k$ clusters.
\end{enumerate}
We prove that the projection obtained using the power method is approximately equivalent to a random projection of the spectral embedding.
Then, by carefully applying the techniques developed by Makarychev~\etal~\cite{kmeansJL} and Macgregor and Sun~\cite{macgregorTighterAnalysisSpectral2022},
we obtain a theoretical bound on the number of vertices misclassified by our proposed algorithm.
Moreover, the time complexity of step $1$ is nearly linear in the size of the graph, and the algorithm is fast in practice.
The formal theoretical guarantee is given in Theorem~\ref{thm:main}.

\subsection{Related Work}
This paper is closely related to a sequence of recent results which prove an upper bound on the number of vertices misclassified by the classical spectral clustering algorithm~\cite{kolevNoteSpectralClustering2016, macgregorTighterAnalysisSpectral2022, mizutaniImprovedAnalysisSpectral2021, pengPartitioningWellClusteredGraphs2017}.
While we will directly compare our result with these in a later section, our proposed algorithm has a much faster running time than the classical spectral clustering algorithm,
and has similar theoretical guarantees.

Boutsidis~\etal~\cite{boutsidis2013deterministic} also study spectral clustering using the power method.
Our result improves on theirs in two respects.
Firstly, our algorithm is faster since we compute $\bigo{\log(k)}$ vectors rather than $k$ vectors and their algorithm includes an additional singular value decomposition step.
Secondly, we give a theoretical upper bound on the total number of vertices misclassified by our algorithm.

Makarychev~\etal~\cite{kmeansJL} generalise the well-known Johnson-Lindenstrauss lemma~\cite{johnson1984extensions} to show that random projections of data into $\bigo{\log(k)}$ dimensions preserves the $k$-means objective, and we make use of their result in our analysis.

Macgregor and Sun~\cite{macgregorTighterAnalysisSpectral2022} show that for graphs with certain structures of clusters, spectral clustering with fewer than $k$ eigenvectors performs better than using $k$ eigenvectors.
In this paper, we present the first proof that embedding with $\bigo{\log(k)}$ vectors is sufficient to find $k$ clusters with spectral clustering.

Other proposed methods for fast spectral clustering include the Nystrom method~\cite{nystromSC} and using a `pre-clustering' step to reduce the number of data points to be clustered~\cite{yan2009fast}.
These methods lack rigorous theoretical guarantees on the accuracy of the returned clustering.
Moreover, our proposed algorithm is significantly simpler to implement that the alternative methods.

\section{Preliminaries}
Let $\geqvew$ be a graph with $n = \cardinality{\vertexset}$ 
and $m = \cardinality{\edgeset}$.
For any $v \in \vertexset$, the degree of $v$ is given by $\deg(v) = \sum_{u \neq v} w(u, v)$.
For any $\sets \subset \vertexset$, the volume of $\sets$ is given by $\vol(\sets) = \sum_{u \in \sets} \deg(u)$.
The Laplacian matrix of $\graphg$ is $\lap = \degm - \adj$
where $\degm$ is the diagonal matrix with $\degm(i, i) = \deg(i)$ and 
$\adj$ is the adjacency matrix of $\graphg$.
The normalised Laplacian is given by $\lapn = \degmhalfneg \lap \degmhalfneg$.
We always use $\lambda_1 \leq \lambda_2 \leq \ldots \leq \lambda_n$ to be the eigenvalues of $\lapn$ and the corresponding eigenvectors are $\vecf_1, \ldots, \vecf_n$.
For any graph, it holds that $\lambda_1 = 0$ and $\lambda_n \leq 2$~\cite{chungSpectralGraphTheory1997}.
We will also use the signless Laplacian matrix\footnote{The signless Laplacian is usually defined to be $2 \cdot \identity - \lapn$. We divide this by $2$ so that the eigenvalues of $\matm$ lie between $0$ and $1$.}
$\signlapn = \identity - (1/2) \lapn$ and will let
$\gamma_1 \geq \ldots \geq \gamma_n$ be the eigenvalues of $\signlapn$.
Notice that by the definition of $\signlapn$, we have $\gamma_i = 1 - (1/2) \lambda_i$
and the eigenvectors of $\signlapn$ are also $\vecf_1, \ldots, \vecf_n$.
For an integer $k$, we let $[k] = \{1, \ldots k\}$ be the set of all positive integers less than or equal to $k$.
Given two sets $\seta$ and $\setb$, their symmetric difference is given by $\seta \symdiff \setb = (\seta \setminus \setb) \union (\setb \setminus \seta)$.
We call $\partitions$ a $k$-way partition of $\vertexset$ if $\sets_i \intersect \sets_j = \emptyset$ for $i \neq j$ and
$\bigcup_{i=1}^k \sets_i = \vertexset$.

Throughout the paper, we use big-O notation to hide constants.
For example, we use $l = \bigo{n}$ to mean that there exists a universal constant $c$ such that $l \leq c n$.
We sometimes use $\bigotilde{n}$ in place of $\bigo{n \log^c(n)}$ for some constant $c$.
Following \cite{macgregorTighterAnalysisSpectral2022}, we say that a partition $\partitions$ of $\vertexset$ is almost-balanced if 
    $
        \vol(\sets_i) = \bigtheta{\vol(\vertexset)/k}
    $
    for all $i \in [k]$.

\subsection{Conductance and the Graph Spectrum}
Given a graph $\geqve$, and a cluster $\sets \subset \vertexset$, the conductance of $\sets$ is given by
\[
    \cond(\sets) \triangleq \frac{\weight(\sets, \setcomplement{\sets})}{\min\{\vol(\sets), \vol(\setcomplement{\sets})\}}
\]
where $\weight(\sets, \setcomplement{\sets}) = \sum_{u \in \sets} \sum_{v \in \setcomplement{\sets}} w(u, v)$.
Then, the $k$-way expansion of $\graphg$ is defined to be
\[
    \rho(k) \triangleq \min_{\mathrm{partition}\ C_1, \ldots C_k} \max_{i} \cond(C_i).
\]
Notice that $\rho(k)$ is small if and only if $\graphg$ can be partitioned into $k$ clusters of low conductance.
There is a close connection between the $k$-way expansion of $\graphg$ and the eigenvalues of the graph Laplacian matrix, as shown in the following higher-order Cheeger inequuality.
\begin{lemma}[Higher-Order Cheeger Inequality, \cite{leeMultiwaySpectralPartitioning2014}]
    For a graph $\graphg$, let $\lambda_1 \leq \ldots \leq \lambda_n$ be the eigenvalues of the normalised Laplacian matrix.
    Then, for any $k$,
    \[
        \frac{\lambda_k}{2} \leq \rho(k) \leq \bigo{k^3} \sqrt{\lambda_k}.
    \]
\end{lemma}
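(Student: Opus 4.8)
The plan is to prove the two inequalities separately, since the lower and upper bounds on $\rho(k)$ rest on completely different ideas. The lower bound $\lambda_k/2 \le \rho(k)$ is the ``easy'' converse direction and follows from the variational (Courant--Fischer) characterisation of eigenvalues. The upper bound $\rho(k) \le \bigo{k^3}\sqrt{\lambda_k}$ is the substantial direction and requires a geometric analysis of the spectral embedding together with a localisation argument. I treat them in turn.

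First I would establish the lower bound. Let $C_1, \ldots, C_k$ be an optimal $k$-way partition achieving $\rho(k) = \max_i \cond(C_i)$, and assume without loss of generality that $\vol(C_i) \le \vol(\setcomplement{C_i})$ for each $i$. Consider the indicator vectors $\indicatorvec_{C_1}, \ldots, \indicatorvec_{C_k}$; since the $C_i$ are disjoint, the vectors $\degmhalf \indicatorvec_{C_i}$ are mutually orthogonal and span a $k$-dimensional subspace. For any $\vecx = \sum_i c_i \indicatorvec_{C_i}$, expanding the Laplacian quadratic form over the edges crossing between clusters and using the elementary inequality $(c_i - c_j)^2 \le 2(c_i^2 + c_j^2)$ gives $\vecx^\transpose \lap \vecx \le 2\sum_i c_i^2\, \weight(C_i, \setcomplement{C_i})$, while $\vecx^\transpose \degm \vecx = \sum_i c_i^2 \vol(C_i)$. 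Hence the Rayleigh quotient of $\degmhalf \vecx$ with respect to $\lapn$ is at most $2\max_i \cond(C_i) = 2\rho(k)$ for every such $\vecx$, and Courant--Fischer over this $k$-dimensional subspace yields $\lambda_k \le 2\rho(k)$, as required.

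For the upper bound, I would use the bottom $k$ eigenvectors $\vecf_1, \ldots, \vecf_k$ of $\lapn$ to build a spectral embedding $F : \vertexset \to \R^k$ (rescaled by $\degmhalfneg$), normalised so that $\sum_{v} \deg(v)\norm{F(v)}^2 = k$ and with total Dirichlet energy $\sum_{\{u,v\} \in \edgeset} \weight(u,v)\norm{F(u) - F(v)}^2 = \sum_{i=1}^k \lambda_i \le k\lambda_k$. The heart of the argument is to convert this single low-energy, $k$-dimensional embedding into $k$ disjointly supported test functions $g_1, \ldots, g_k$, each with small Rayleigh quotient $\bigo{\poly(k)}\lambda_k$. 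Once such functions are in hand, applying the standard one-set Cheeger rounding via a sweep cut to each $g_i$ produces a set $C_i$ with $\cond(C_i) \le \sqrt{2 \cdot \bigo{\poly(k)}\lambda_k} = \bigo{\poly(k)}\sqrt{\lambda_k}$, and because the $g_i$ have disjoint supports the resulting sets can be made into a valid $k$-way partition, which bounds $\rho(k)$.

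The main obstacle is precisely this localisation step: extracting $k$ disjointly supported, low-energy functions from the embedding. The approach of Lee, Oveis Gharan and Trevisan is to treat $\deg(v)\norm{F(v)}^2$ as a mass distribution in $\R^k$, carve out $k$ well-separated regions of comparable mass, and multiply $F$ by smooth radial cutoffs localised to each region. The delicate part is controlling the extra energy introduced by these cutoffs: one must show that the ``spreading'' of the embedding forces the regions to be separated by a margin large enough that each localised function has Rayleigh quotient only a $\poly(k)$ factor above $\lambda_k$. This is where the $k^3$ dependence originates, and I would expect to invoke a random-projection or padded-decomposition argument to perform the separation, since bounding the worst-case overlap between the localised pieces is exactly what fixes the polynomial dependence on $k$.
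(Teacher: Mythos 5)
The paper never proves this lemma---it is imported verbatim from Lee, Oveis Gharan and Trevisan \cite{leeMultiwaySpectralPartitioning2014}---so your attempt can only be measured against that external proof, whose structure you have correctly identified. Your lower-bound argument is complete and correct: expanding the quadratic form over cross-edges, using $(c_i - c_j)^2 \leq 2(c_i^2 + c_j^2)$, and applying Courant--Fischer to the $k$-dimensional subspace spanned by the vectors $\degmhalf \indicatorvec_{C_i}$ yields $\lambda_k \leq 2 \max_i \cond(C_i) = 2\rho(k)$. One small repair: your ``without loss of generality $\vol(C_i) \leq \vol(\setcomplement{C_i})$'' is not a legitimate WLOG (a partition may have one part containing more than half the volume), but it is also unnecessary, since $\weight(C_i, \setcomplement{C_i})/\vol(C_i) \leq \cond(C_i)$ holds in all cases because $\vol(C_i) \geq \min\{\vol(C_i), \vol(\setcomplement{C_i})\}$.

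The upper bound, however, contains a genuine gap---two, in fact. First, the localisation step you describe (carving the spectral embedding into $k$ disjointly supported functions with Rayleigh quotient $\poly(k) \cdot \lambda_k$ via smooth radial cutoffs and a random partition of the embedding space) is the entire technical content of the theorem, and you name the technique without executing it: nothing in your proposal establishes that $k$ well-separated regions of comparable mass exist, which requires the ``spreading'' or isotropy properties of the normalised embedding $v \mapsto F(v)/\norm{F(v)}$ that Lee, Oveis Gharan and Trevisan develop at length. As written, this part is a roadmap of their proof rather than a proof. Second, and more concretely wrong, your closing claim that ``because the $g_i$ have disjoint supports the resulting sets can be made into a valid $k$-way partition'' does not hold as stated: sweep cuts applied to disjointly supported functions produce pairwise disjoint sets $C_i \subseteq \supp(g_i)$ which need not cover $\vertexset$, whereas $\rho(k)$ as defined in this paper is a minimum over genuine partitions of $\vertexset$. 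The leftover set $\setr = \vertexset \setminus \bigcup_i C_i$ cannot simply be merged into one part, since $\cond(C_1 \union \setr)$ is not controlled by $\cond(C_1)$---merging can destroy the conductance bound entirely. Passing from $k$ disjoint low-conductance sets to a low-conductance $k$-way partition requires a separate argument, and it is one of the places where the polynomial dependence on $k$ degrades (the disjoint-sets form of the theorem carries a better exponent than the partition form cited here), so this step cannot be waved away.
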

From this lemma, we can see that
an upper bound on $\rho(k)$ and 
a lower bound on $\lambda_{k+1}$
are sufficient conditions 
to guarantee that $\graphg$ can be partitioned into $k$ clusters of low conductance, and cannot be partitioned into $k+1$ clusters.
This condition is commonly used in the analysis of graph clustering~\cite{kolevNoteSpectralClustering2016, macgregorTighterAnalysisSpectral2022, mizutaniImprovedAnalysisSpectral2021, pengPartitioningWellClusteredGraphs2017}.

\subsection{\texorpdfstring{The $k$-means Objective}{The k-means Objective}}
For any matrix $\matb \in \R^{n \times d}$, the $k$-means cost of a $k$-way partition $\partitiona$ of the data points is given by
\[
    \cost_{\matb}(\seta_1, \ldots, \seta_k) \triangleq \sum_{i = 1}^k \sum_{u \in \seta_i} \norm{\matb(u, :) - \mu_i}_2^2,
\]
where
$\matb(u, :)$ is the $u$th row of $\matb$
and
$\mu_i = (1 / \cardinality{\seta_i}) \sum_{u \in \seta_i} \matb(u, :)$ is the mean of the points in $\seta_i$.
Although optimising the $k$-means objective is \textsf{NP}-hard, there is a polynomial-time constant factor approximation algorithm~\cite{kmeansapprox},
and an approximation scheme which is polynomial in $n$ and $d$ with an exponential dependency on $k$~\cite{kmeansptas1, kmeansptas2}.
Lloyds algorithm~\cite{lloyd1982least}, with $k$-means++ initialisation~\cite{kmeanspp} is an $\bigo{\log(n)}$-approximation algorithm which is fast and effective in practice.

\subsection{The Power Method}
The power method is an algorithm which is most often used to approximate the dominant eigenvalue of a matrix~\cite{golub2000eigenvalue, muntz1913solution}.
Given some matrix $\matm \in \R^{n \times n}$, a vector $\vecx_0 \in \R^n$, and a positive integer $t$, the power method computes
the value of $\matm^t \vecx_0$ by repeated multiplication by $\matm$.
The formal algorithm is given in Algorithm~\ref{alg:powermethod}.

\begin{algorithm}[htb] \SetAlgoLined
\caption{\algpowermethod$(\matm \in \R^{n \times n}, \vecx_0 \in \R^n, t \in \Z_{\geq 0})$ \label{alg:powermethod}}
\For{$i \in \{1, \ldots, t\}$}{
    $\vecx_i = \matm \vecx_{i - 1}$
}
\Return $\vecx_t$
\end{algorithm}

\section{Algorithm Description and Analysis} \label{sec:analysis}
In this section, we present our newly proposed algorithm and sketch the proof of our main result.
Omitted proofs can be found in the Appendix.
We first prove that if
$\matm$ has $k$ eigenvalues $\gamma_1, \ldots, \gamma_k$ close to $1$,
then the power method can be used to compute a random vector in the span of the eigenvectors
corresponding to $\gamma_1, \ldots, \gamma_k$.

We then apply this result to the signless Laplacian matrix of a graph to develop a fast spectral clustering
algorithm and we bound the number of misclassified vertices when the algorithm is applied to a well-clustered graph.

\subsection{Approximating a Random Vector with the Power Method}
Suppose we are given some matrix $\matm \in \R^{n \times n}$ with eigenvalues $1 \geq \gamma_1 \geq \ldots \geq \gamma_n \geq 0$ and
corresponding eigenvectors $\vecf_1, \ldots, \vecf_n$.
Typically, the power method is used to approximate the dominant eigenvector, $\vecf_1$.
In this section, we show that when $\gamma_k$ is sufficiently close to $1$, the power method
can be used to compute a random vector in the space spanned by $\vecf_1, \ldots, \vecf_k$.

Let $\vecx_0 \in \R^n$ be a random vector chosen according to the $n$-dimensional Gaussian distribution $\mathrm{N}(\vec{0}, \identity)$.
We can write $\vecx_0$ as a linear combination of the eigenvectors:
\[
    \vecx_0 = \sum_{i = 1}^n a_i \vecf_i,
\]
where $a_i = \inner{\vecx_0}{\vecf_i}$.
Then, the vector $\vecx_t$ computed by $\algpowermethod(\matm, \vecx_0, t)$ can be written as
\[
    \vecx_t = \sum_{i = 1}^n a_i \gamma_i^t \vecf_i.
\]
Informally, if $\gamma_k \geq 1 - c_1$ and $\gamma_{k+1} \leq 1 - c_2$ for sufficiently small $c_1$ and sufficiently large $c_2$, then for a carefully
chosen value of $t$ we have
$\gamma_i^t \approx 1$ for $i \leq k$ and $\gamma_i^t \approx 0$ for $i \geq k + 1$.
This implies that
\[
    \vecx_t \approx \sum_{i = 1}^k a_i \vecf_i = \left(\sum_{i = 1}^k \vecf_i \vecf_i^\transpose \right) \vecx_0,
\]
and we observe that $\left(\sum_{i = 1}^k \vecf_i \vecf_i^\transpose \right) \vecx_0$ is a random vector distributed according to a $k$-dimensional
Gaussian distribution in the space spanned by $\vecf_1, \ldots, \vecf_k$.
We make this intuition formal in Lemma~\ref{lem:powermethod} and specify the required conditions on $\gamma_k$, $\gamma_{k+1}$ and $t$.

    \begin{lemma} \label{lem:powermethod}
        
    Let $\matm \in \R^{n \times  n}$ be a matrix with eigenvalues $1 \geq \gamma_1 \geq \ldots \geq \gamma_n \geq 0$ and corresponding eigenvectors $\vecf_1, \ldots \vecf_n$.
    Let $\vecx_0 \in \R^n$ be drawn from the $n$-dimensional Gaussian distribution $N(\vec{0}, \identity)$.
    Let $\vecx_t = \algpowermethod(\matm, \vecx_0, t)$ for $t = \bigtheta{\log(n / \epsilon^2 k)}$.
    If $\gamma_k \geq 1 - \bigo{\epsilon \cdot \log(n / \epsilon^2 k)^{-1}}$ and $\gamma_{k+1} \leq 1 - \bigomega{1}$, then
    with probability at least $1 - 1 / (10 k)$, 
    \[
        \norm{\vecx_t - \matp \vecx_0}_2 \leq \epsilon \sqrt{k},
    \]
    where $\matp = \sum_{i = 1}^k \vecf_i \vecf_i^\transpose$ is the projection onto the space spanned by the first $k$ eigenvectors of $\matm$.

    \end{lemma}

\subsection{The Fast Spectral Clustering Algorithm}
We now introduce the fast spectral clustering algorithm.
The algorithm follows the pattern of the classical spectral clustering algorithm, with an important difference:
rather than embedding the vertices according to $k$ eigenvectors of the graph Laplacian,
we embed the vertices with $\bigtheta{\log(k) \cdot \epsilon^{-2}}$ random vectors computed with the power method for the signless graph Laplacian $\signlapn$.
Algorithm~\ref{alg:fsc} formally specifies the algorithm, and the theoretical guarantees are given in Theorem~\ref{thm:main}.

    \begin{theorem} \label{thm:main}
            Let $\graphg$ be a graph with $\lambda_{k+1} = \bigomega{1}$ and $\rho(k) = \bigo{\epsilon \cdot \log(n/\epsilon)^{-1}}$.
    Additionally, let $\partitions$ be the $k$-way partition corresponding to $\rho(k)$ and suppose that $\partitions$ are almost balanced.
    Let $\partitiona$ be the output of Algorithm~\ref{alg:fsc}.
    With probability at least $0.9 - \epsilon$, there exists a permutation $\sigma: [k] \xrightarrow{} [k]$ such that
    \[
        \sum_{i = 1}^k \vol(\seta_i \symdiff \sets_{\sigma(i)}) = \bigo{\epsilon \cdot \vol(\vertexset_\graphg)}.
    \]
    Moreover, the running time of Algorithm~\ref{alg:fsc} is
    \[
        \bigotilde{m \cdot \epsilon^{-2}} + T_{\mathrm{KM}}(n, k, l),
    \]
    where $m$ is the number of edges in $\graphg$ and $T_{\mathrm{KM}}(n, k, l)$ is the running time of the $k$-means approximation algorithm on $n$ points in $l$ dimensions.

    \end{theorem}

\begin{algorithm}[t] \SetAlgoLined
\caption{\algfsc$(\geqve, k \in \Z_{\geq 0}, \epsilon \in [0, 1])$ \label{alg:fsc}}
$\matm \gets \identity - (1/2) \cdot \lapn_\graphg$ \\
$l \gets \bigtheta{\log(k) \cdot \epsilon^{-2}}$ \\
$t \gets \bigtheta{\log(n/\epsilon^2 k)}$ \\
\For{$i \in \{1, \ldots, l\}$}{
    Let $\vecx_i \in \R^n$ be a random vector from Gaussian distribution $\mathrm{N}(\vec{0}, \identity)$\\
    $\vecy_i \gets \algpowermethod(\matm, \vecx_i, t)$ \\
}
$\mat{Y} \gets \left[\vecy_1; \ldots; \vecy_l  \right]$ \\
$\seta_1, \ldots, \seta_k \gets \algkmeans(\degm_\graphg^{-1/2} \mat{Y}, k)$ \\
\Return $\seta_1, \ldots, \seta_k$
\end{algorithm}

\begin{remark}
    The assumptions on $\lambda_{k+1}$ and $\rho(k)$ in Theorem~\ref{thm:main}
    imply
    that the graph $\graphg$ can be partitioned into exactly $k$ clusters of low conductance.
    This is related to previous results which make an assumption on the ratio $\lambda_{k+1} / \rho(k)$~\cite{macgregorTighterAnalysisSpectral2022, pengPartitioningWellClusteredGraphs2017, sun2019distributed}.
    Macgregor and Sun~\cite{macgregorTighterAnalysisSpectral2022} prove a guarantee like Theorem~\ref{thm:main} under the assumption that $\lambda_{k+1} / \rho(k) = \bigomega{1}$.
    We achieve a faster algorithm under a slightly stronger assumption.
\end{remark}

\begin{remark}
    The running time of Theorem~\ref{thm:main} improves on previous spectral clustering algorithms.
    Boutsidis~\etal~\cite{boutsidis2015spectral} describe an algorithm with running time $\bigotilde{m \cdot k \cdot \epsilon^{-2}} + \bigo{k^2 \cdot n} + T_{\mathrm{KM}}(n, k, k)$.
    Moreover, their analysis does not provide any guarantee on the number of misclassified vertices.
\end{remark}

\begin{remark}
    The constants in the definition of $l$ and $t$ in Algorithm~\ref{alg:fsc} are based on those in the analysis of Makarychev~\etal~\cite{kmeansJL} and this paper.
    In practice, we find that setting $l = \log(k)$ and $t = 10 \log(n/k)$ works well.
\end{remark}


Throughout the remainder of this section, we will sketch the proof of Theorem~\ref{thm:main}.
We assume that
$\geqve$ is a graph with $k$ clusters $\{S_i\}_{i = 1}^k$ of almost balanced size, 
$\lambda_{k+1} = \bigomega{1}$, and
$\rho(k) = \bigo{\epsilon \cdot \log(n / \epsilon)^{-1}}$.

In order to understand the behaviour of the $k$-means algorithm on the computed vectors, we will analyse the $k$-means cost of a given partition under
three different embeddings of the vertices.
Let $\vecf_1, \ldots, \vecf_k$ be the eigenvectors of $\signlapn$ corresponding to the eigenvalues $\gamma_1, \ldots, \gamma_k$
and let $\vecy_1, \ldots, \vecy_l$ be the vectors computed in Algorithm~\ref{alg:fsc}.
We will also consider the vectors $\vecz_1, \ldots, \vecz_l$ given by
$
    \vecz_i = \matp \vecx_i,
$
where $\{\vecx_i\}_{i = 1}^k$ are the random vectors sampled in Algorithm~\ref{alg:fsc}, and $\matp = \sum_{i = 1}^k \vecf_i \vecf_i^\transpose$ is the projection onto the
space spanned by $\vecf_1, \ldots, \vecf_k$.
Notice that each $\vecz_i$ is a random vector distributed according to the $k$-dimensional Gaussian distribution.
Furthermore, let
\[
    \mat{F} =
    \begin{bmatrix}
        \vert & & \vert \\
        \vecf_1 & \ldots & \vecf_k \\
        \vert & & \vert
    \end{bmatrix}
    \mbox{,} \quad
    \mat{Y} =
    \begin{bmatrix}
        \vert & & \vert \\
        \vecy_1 & \ldots & \vecy_l \\
        \vert & & \vert
    \end{bmatrix}
    \quad \mbox{ and } \quad
    \mat{Z} =
    \begin{bmatrix}
        \vert & & \vert \\
        \vecz_1 & \ldots & \vecz_l \\
        \vert & & \vert
    \end{bmatrix}
    .
\]
We will consider the vertex embeddings given by $\degm^{-1/2} \mat{F}$, $\degm^{-1/2} \mat{Z}$ and $\degm^{-1/2} \mat{Y}$ and show that the $k$-means objective
for every $k$-way partition
is approximately equal in each of them.
We will use the following result shown by Makarychev~\etal~\cite{kmeansJL}.

\begin{lemma} [\cite{kmeansJL}, Theorem 1.3] \label{lem:kmeansjl}
    Given data $\mat{X} \in \R^{n \times k}$, let $\mat{\Pi} \in \R^{k \times l}$ be a random matrix with each column sampled from the $k$-dimensional Gaussian distribution $\mathrm{N}(\vec{0}, \identity_k)$ and 
    \[
        l = \bigo{\frac{\log(k) + \log(1 / \epsilon)}{\epsilon^2}}.
    \]
    Then, with probability at least $1 - \epsilon$, it holds for all partitions $\partitiona$ of $[n]$ that
    \[
        \cost_{\mat{X}}(\seta_1, \ldots, \seta_k) \in (1 \pm \epsilon) \cost_{\mat{X} \mat{\Pi}}(\seta_1, \ldots, \seta_k).
    \]
\end{lemma}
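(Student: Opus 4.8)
The plan is to recast the $k$-means cost as a quadratic form and reduce the statement to a concentration inequality for Gaussian quadratic forms, then to confront the union bound over partitions. First I would write the cost in matrix form. Let $\matp_\seta = \sum_{i=1}^k \cardinality{\seta_i}^{-1} \indicatorvec_{\seta_i}\indicatorvec_{\seta_i}^\transpose$ be the orthogonal projection that replaces each row of a data matrix by the mean of its cluster, where $\indicatorvec_{\seta_i}$ is the indicator of $\seta_i$, and set $\mat{Q}_\seta = \identity - \matp_\seta$. Since $\matp_\seta$ is a rank-$k$ orthogonal projection, a short calculation gives
\[
    \cost_{\mat{X}}(\seta_1, \ldots, \seta_k) = \fnorm{\mat{Q}_\seta \mat{X}}^2 = \tr(\mat{K}_\seta), \qquad \mat{K}_\seta \triangleq \mat{X}^\transpose \mat{Q}_\seta \mat{X} \in \R^{k \times k},
\]
and, using $\mat{Q}_\seta^\transpose \mat{Q}_\seta = \mat{Q}_\seta$, likewise $\cost_{\mat{X}\mat{\Pi}}(\seta_1, \ldots, \seta_k) = \tr(\mat{\Pi}^\transpose \mat{K}_\seta \mat{\Pi})$, with $\mat{K}_\seta$ positive semidefinite. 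So the whole statement reduces to showing that $\tr(\mat{\Pi}^\transpose \mat{K}\mat{\Pi})$ is within a $(1\pm\epsilon)$ factor of its mean $l\cdot\tr(\mat{K})$ simultaneously for every $\mat{K}=\mat{K}_\seta$ coming from a $k$-way partition (the factor $l$ being absorbed by the usual $1/\sqrt{l}$ rescaling of $\mat{\Pi}$, so that $\frac1l\cost_{\mat X\mat\Pi}$ is compared with $\cost_{\mat X}$).

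Second, I would establish the bound for a single fixed partition. Writing $\mat{\Pi}=[\vecp_1,\ldots,\vecp_l]$ with $\vecp_j \sim \mathrm{N}(\zerovec,\identity_k)$ independent, we have $\tr(\mat{\Pi}^\transpose \mat{K}\mat{\Pi}) = \sum_{j=1}^l \vecp_j^\transpose \mat{K}\vecp_j$, a sum of $l$ i.i.d.\ quadratic forms. Diagonalising $\mat{K}$ turns this into a nonnegative weighted sum of $\chi^2$ variables whose total weight is $\tr(\mat{K})$, so a Bernstein/Hanson–Wright estimate yields deviation exceeding $\epsilon\, l\,\tr(\mat{K})$ with probability at most $2\exp(-\Omega(\epsilon^2 l\,\tr(\mat{K})/\norm{\mat K}_{\mathrm{op}}))$. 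Since $\norm{\mat{K}}_{\mathrm{op}} \le \tr(\mat{K})$ always holds, this is $2\exp(-\Omega(\epsilon^2 l))$, which for $l = \Theta((\log k + \log(1/\epsilon))/\epsilon^2)$ is at most $\epsilon\cdot\poly(1/k)$. This proves the lemma for any one partition named in advance.

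The main obstacle is the union bound. There are up to $k^n$ partitions, so naively combining the per-partition failure probabilities would demand $l = \Omega(n\log k/\epsilon^2)$, which is useless. One clean way to cover all partitions at once is to control $\norm{\frac1l\mat{\Pi}^\transpose\mat{\Pi}-\identity_k}_{\mathrm{op}} \le \epsilon$: then $\lvert\tr((\frac1l\mat{\Pi}^\transpose\mat{\Pi}-\identity_k)\mat{K})\rvert \le \norm{\frac1l\mat{\Pi}^\transpose\mat{\Pi}-\identity_k}_{\mathrm{op}}\,\tr(\mat{K})$ preserves every $\mat K_\seta$ simultaneously, but the concentration of the extreme singular values of a Gaussian matrix forces $l = \Omega(k/\epsilon^2)$. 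Thus both elementary routes cost at least $k/\epsilon^2$ dimensions (or depend on $n$), and the genuine difficulty is to reach $O(\log k/\epsilon^2)$.

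Getting down to $\log k$ cannot come from treating the $\mat{K}_\seta$ as arbitrary PSD matrices; it must exploit that each $\mat{K}_\seta$ is the residual second-moment matrix of a clustering into only $k$ parts, so that the cost is governed by the distances of the points to just $k$ centroids. The plan is therefore to pass from the partition description to this centre-based description, discretise the relevant centroid configurations by a multiplicative net at scale $\epsilon$, and union-bound the per-configuration concentration over a net whose log-cardinality is driven by the number of centres rather than by $n$ — separating the few heavy spectral directions of $\mat{K}_\seta$ (handled deterministically) from the light tail (handled by the concentration of the second step). Carrying out this net/chaining argument at the correct scale so that the final dimension is $O(\log k/\epsilon^2)$ rather than $O(k/\epsilon^2)$ is exactly the delicate part, and it is where essentially all the work of Makarychev~\etal~\cite{kmeansJL} lies; I expect this to be the hardest step and would lean on their argument for it.
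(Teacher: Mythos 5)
This lemma is not proved in the paper at all --- it is imported verbatim, with citation, from Makarychev~\etal~\cite{kmeansJL} (their Theorem~1.3) --- and since your sketch carries out only the elementary reductions (the quadratic-form identity $\cost_{\mat{X}}(\seta_1,\ldots,\seta_k)=\tr(\mat{X}^\transpose(\identity-\matp_{\seta})\mat{X})$, the single-partition Hanson--Wright bound $2\exp(-\Omega(\epsilon^2 l))$, and the correct diagnosis that the naive union bound over $k^n$ partitions and the covariance/subspace-embedding route would cost $\Omega(n\log(k)/\epsilon^2)$ and $\Omega(k/\epsilon^2)$ dimensions respectively) before explicitly deferring the genuinely hard simultaneous-over-all-partitions argument to that same source, your treatment is essentially the same as the paper's. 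Two small remarks: your observation that the statement requires the $1/\sqrt{l}$ rescaling is correct and worth flagging (as printed, with columns of $\mat{\Pi}$ drawn from $\mathrm{N}(\vec{0},\identity_k)$, one has $\expected{\cost_{\mat{X}\mat{\Pi}}}=l\cdot\cost_{\mat{X}}$, which is harmless downstream only because rescaling the embedding rescales all partition costs uniformly), and in your covariance route the relevant matrix is $\frac{1}{l}\mat{\Pi}\mat{\Pi}^\transpose\in\R^{k\times k}$ rather than $\frac{1}{l}\mat{\Pi}^\transpose\mat{\Pi}$.
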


Applying this lemma with $\mat{X} = \degmhalfneg \matf$ and $\mat{\Pi} = \matf^\transpose \mat{Z}$ shows that the $k$-means cost is approximately equal in the embeddings given by $\degmhalfneg \matf$ and
$\degmhalfneg \matz$, since $\matf \matf^\transpose \matz = \matz$ and each of the entries of $\matf^\transpose \matz$ is distributed according to the Gaussian distribution $\mathrm{N}(0, 1)$.\footnote{There is some interesting subtlety in this argument. If we ignore the $\degmhalfneg$ matrix, then $\matf$ is the data matrix, and $\matf^\transpose \matz$ represents the random projection. After projecting the data, we are left with the projection matrix $\matz$ itself. This happens only because the data matrix is the orthonormal basis $\matf$ which is also used to project $\matz$.}
By Lemma~\ref{lem:powermethod}, we can also show that the $k$-means objective in $\degmhalfneg \maty$ is within an additive error of $\degmhalfneg \matz$.
This allows us to prove the following lemma.

    \begin{lemma} \label{lem:kmeanscost}
            With probability at least $0.9 - \epsilon$, for any partitioning $\partitiona$ of the vertex set $\vertexset$, we have
    \[
        \cost_{\degm^{-1/2}\maty}(\seta_1, \ldots, \seta_k) \geq (1 - \epsilon) \cost_{\degm^{-1/2}\matf}(\seta_1, \ldots, \seta_k) - \epsilon k
    \]
    and 
    \[
        \cost_{\degm^{-1/2} \maty}(\seta_1, \ldots, \seta_k) \leq (1 + \epsilon) \cost_{\degm^{-1/2}\matf}(\seta_1, \ldots, \seta_k) + \epsilon k.
    \]
    \end{lemma}

To complete the proof of Theorem~\ref{thm:main}, we will make use of the following results proved by Macgregor and Sun~\cite{macgregorTighterAnalysisSpectral2022}.

\begin{lemma}[\cite{macgregorTighterAnalysisSpectral2022}, Lemma 4.1] \label{lem:ms41}
    There exists a partition $\{\seta_i\}_{i = 1}^k$ of the vertex set $\vertexset$ such that
    \[
        \cost_{\degm^{-1/2}\mat{F}}(\seta_1, \ldots \seta_k) < k \cdot \rho(k) / \lambda_{k+1}.
    \]
\end{lemma}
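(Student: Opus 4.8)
The plan is to exhibit the partition explicitly: take $\seta_i = \sets_i$ to be the $k$-way partition that attains $\rho(k)$, so that $\cond(\sets_i) \le \rho(k)$ for every $i$. Since the $k$-means cost with optimal (mean) centres is at most the cost obtained with any fixed centres, it suffices to produce a single matrix $\matx \in \R^{n \times k}$ whose rows are constant on each $\sets_i$ and to bound $\fnorm{\degmhalfneg \matf - \matx}^2$. The guiding intuition is that in the idealised case of $k$ disconnected clusters the columns of $\matf$ span exactly $\{\degmhalf \mathbf{1}_{\sets_i}\}_{i=1}^k$, so $\degmhalfneg \matf$ is genuinely piecewise constant and the cost is zero; the general bound should measure how far the true eigenspace sits from this indicator space, and this distance is what the spectral gap controls.

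To make this precise, introduce the normalised indicators $\overline{\vecg}_i = \degmhalf \mathbf{1}_{\sets_i} / \sqrt{\vol(\sets_i)}$, which are orthonormal, and let $\overline{\mat G} = [\overline{\vecg}_1, \ldots, \overline{\vecg}_k]$. A direct computation using $\lap = \degmhalf \lapn \degmhalf$ gives the Rayleigh quotient
\[
    \overline{\vecg}_i^\transpose \lapn \overline{\vecg}_i = \frac{\mathbf{1}_{\sets_i}^\transpose \lap \mathbf{1}_{\sets_i}}{\vol(\sets_i)} = \frac{\weight(\sets_i, \comp{\sets_i})}{\vol(\sets_i)} \le \cond(\sets_i) \le \rho(k).
\]
Expanding $\overline{\vecg}_i = \sum_{j} \beta_{ij} \vecf_j$ with $\beta_{ij} = \inner{\overline{\vecg}_i}{\vecf_j}$, the Rayleigh quotient equals $\sum_j \beta_{ij}^2 \lambda_j$, and since $\lambda_j \ge \lambda_{k+1}$ for $j > k$ we obtain $\norm{(\identity - \matp)\overline{\vecg}_i}^2 = \sum_{j > k} \beta_{ij}^2 \le \rho(k)/\lambda_{k+1}$, where $\matp = \matf \matf^\transpose$ projects onto the span of $\vecf_1, \ldots, \vecf_k$. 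Summing over the $k$ clusters yields $\sum_{i=1}^k \norm{(\identity - \matp)\overline{\vecg}_i}^2 \le k \cdot \rho(k)/\lambda_{k+1}$, which already matches the target bound.

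It remains to convert this quantity into the $k$-means cost. Let $\mat{Q} = \overline{\mat G}\,\overline{\mat G}^\transpose$ be the projection onto the indicator space and take $\matx = \degmhalfneg \mat{Q} \matf$; because $\degmhalfneg \overline{\vecg}_i = \mathbf{1}_{\sets_i}/\sqrt{\vol(\sets_i)}$ is constant on $\sets_i$, this $\matx$ is piecewise constant and is thus an admissible choice of centres. Then
\[
    \cost_{\degmhalfneg \matf}(\sets_1, \ldots, \sets_k) \le \fnorm{\degmhalfneg (\identity - \mat{Q}) \matf}^2 \le \fnorm{(\identity - \mat{Q})\matf}^2,
\]
where the last step uses $\degm^{-1} \preceq \identity$ (which holds when the minimum degree is at least one). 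Finally, because $\matp$ and $\mat{Q}$ are projections onto two $k$-dimensional subspaces, $\fnorm{(\identity - \mat{Q})\matf}^2 = k - \fnorm{\matf^\transpose \overline{\mat G}}^2 = \fnorm{(\identity - \matp)\overline{\mat G}}^2 = \sum_{i=1}^k \norm{(\identity - \matp)\overline{\vecg}_i}^2$, which we bounded above by $k \cdot \rho(k)/\lambda_{k+1}$. The main obstacle is exactly this last translation: the spectral estimate naturally lives in the indicator basis $\{\overline{\vecg}_i\}$ while the cost is defined through the embedding $\degmhalfneg \matf$, and the argument hinges on choosing centres as the projection onto the indicator space, carrying the degree normalisation through $\degmhalfneg$, and exploiting the symmetry $\fnorm{(\identity - \mat{Q})\matf} = \fnorm{(\identity - \matp)\overline{\mat G}}$ between the distances of the two subspaces.
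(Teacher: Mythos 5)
The paper never proves this lemma: it is imported verbatim from Macgregor and Sun~\cite{macgregorTighterAnalysisSpectral2022}, so there is no internal proof to compare against. Your reconstruction is correct and is essentially the standard argument from the cited literature (going back to Peng, Sun and Zanetti~\cite{pengPartitioningWellClusteredGraphs2017}): the Rayleigh-quotient computation $\overline{\vecg}_i^\transpose \lapn \overline{\vecg}_i = \weight(\sets_i,\comp{\sets_i})/\vol(\sets_i) \leq \cond(\sets_i) \leq \rho(k)$, the spectral-gap bound $\norm{(\identity-\matp)\overline{\vecg}_i}_2^2 \leq \rho(k)/\lambda_{k+1}$, the choice of the piecewise-constant centre matrix $\degmhalfneg \mat{Q} \matf$, and the subspace symmetry $\fnorm{(\identity-\mat{Q})\matf}^2 = k - \fnorm{\matf^\transpose \overline{\mat{G}}}^2 = \fnorm{(\identity-\matp)\overline{\mat{G}}}^2$ all check out (the trace computation behind the last identity is exactly right, and your centre matrix is indeed constant on each $\sets_i$).

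The one step to flag is $\fnorm{\degmhalfneg(\identity-\mat{Q})\matf} \leq \fnorm{(\identity-\mat{Q})\matf}$, which, as you yourself note, requires minimum degree at least $1$. This is not an artefact of your argument but of a mismatch in the present paper: here $\cost$ is the \emph{unweighted} $k$-means objective on the rows of $\degmhalfneg\matf$, whereas the source~\cite{macgregorTighterAnalysisSpectral2022} uses the degree-weighted cost $\sum_{i}\sum_{u \in \seta_i} \deg(u)\norm{\cdot}_2^2$, under which your centre choice gives $\cost \leq \fnorm{(\identity-\mat{Q})\matf}^2$ \emph{exactly}, with no degree condition. Some such normalisation is genuinely needed for the unweighted statement on general weighted graphs, since the left-hand side scales like $1/c$ when all weights are multiplied by $c$ while $k\cdot\rho(k)/\lambda_{k+1}$ is scale-invariant; for graphs with $\deg(u) \geq 1$ (in particular unweighted graphs without isolated vertices) your proof is complete. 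Two trivial nits: you prove $\leq$ where the lemma states strict $<$, and it would be worth saying explicitly that the mean centres minimise the cost, which justifies substituting your fixed centres.
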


\begin{lemma}[\cite{macgregorTighterAnalysisSpectral2022}, Theorem~2] \label{lem:ms2}
    Given some partition of the vertices, $\{\seta_i\}_{i = 1}^k$, such that
    \[
        \cost_{\degm^{-1/2}\mat{F}}(\seta_1, \ldots \seta_k) \leq c \cdot k,
    \]
    then there exists a permutation $\sigma: [k] \xrightarrow{} [k]$ such that
    \[
        \sum_{i = 1}^k \vol(\seta_i \symdiff \sets_{\sigma(i)}) = \bigo{c \cdot \vol(\vertexset)}.
    \]
\end{lemma}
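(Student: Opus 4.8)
The plan is to prove this as a converse to Lemma~\ref{lem:ms41}: whereas that lemma exhibits the ground-truth partition $\partitions$ as a \emph{low}-cost partition of the embedding $\degmhalfneg \matf$, here I must show that \emph{every} low-cost partition is close to $\partitions$. The whole argument rests on a structural description of $\degmhalfneg \matf$ that holds because $\graphg$ is well-clustered (the upper bound on $\rho(k)$ together with $\lambda_{k+1} = \bigomega{1}$ in force throughout this section), so I would establish that first.

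First I would pin down the geometry of the embedding. For each cluster define the normalised indicator $\vecg_i = \degmhalf \indicatorvec_{\sets_i} / \sqrt{\vol(\sets_i)}$; the $\vecg_i$ are orthonormal and span a $k$-dimensional space. A standard structure theorem, driven by the gap $\lambda_{k+1} = \bigomega{1}$, shows $\mathrm{span}\{\vecf_1, \ldots, \vecf_k\}$ is close to $\mathrm{span}\{\vecg_1, \ldots, \vecg_k\}$, and hence that there are $k$ points $p_1, \ldots, p_k \in \R^k$ with two properties: (i) \emph{concentration} — the embedded vertices of $\sets_i$ cluster around $p_i$, the total (degree-weighted) squared deviation being $\bigo{k \cdot \rho(k) / \lambda_{k+1}}$, which is precisely Lemma~\ref{lem:ms41}; and (ii) \emph{separation} — since the $\vecg_i$ are orthonormal with $\norm{p_i}_2 \approx 1 / \sqrt{\vol(\sets_i)}$, the centres are pairwise far apart, $\norm{p_i - p_j}_2^2 = \bigomega{1 / \vol(\sets_i) + 1 / \vol(\sets_j)}$.

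Next I would convert the given partition $\partition{\seta}$ into a matching. Let $\mu_i$ be the centre of $\seta_i$. The bound $\cost \leq c \cdot k$ forces the mass of each $\seta_i$ to concentrate near a single centre $p_{\sigma(i)}$: were it split appreciably between two of the well-separated $p_j$'s, property (ii) alone would push the within-part cost past $c \cdot k$. This defines $\sigma$, and I would check it is a bijection by pigeonhole — if two parts concentrated near the same $p_j$, some ground-truth cluster would be left essentially unmatched, and its mass, sitting $\bigomega{\sqrt{k / \vol(\vertexset)}}$ away from every $\mu_i$, would again overspend the cost budget.

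Finally comes the counting. A vertex $u \in \sets_j$ placed in $\seta_i$ with $\sigma(i) \neq j$ sits near $p_j$ inside a part centred near $p_{\sigma(i)}$, so it is charged $\bigomega{\norm{p_j - p_{\sigma(i)}}_2^2} = \bigomega{1 / \vol(\sets_j)}$ in the cost (weighted by $\deg(u)$ in the degree-weighted objective that yields a volume conclusion). Summing over all misplaced vertices, using almost-balancedness to write $1 / \vol(\sets_j) = \bigtheta{k / \vol(\vertexset)}$, and discarding the lower-order concentration error from (i), the budget $c \cdot k$ turns into $\sum_{i} \vol(\seta_i \symdiff \sets_{\sigma(i)}) = \bigo{c \cdot \vol(\vertexset)}$. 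The main obstacle is step one: proving the quantitative separation (ii) from the spectral gap and then doing the degree bookkeeping honestly, so that a single misplaced unit of \emph{volume} is charged $\bigtheta{k / \vol(\vertexset)}$ and the $\degmhalfneg$ normalisation correctly turns the cost budget into a statement about $\vol(\cdot)$ rather than a mere vertex count — this is exactly where the constants hidden in the $\bigo{\cdot}$ are determined.
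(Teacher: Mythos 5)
The first thing to note is that this paper contains no proof of the lemma to compare against: it is imported as a black box from Macgregor and Sun \cite{macgregorTighterAnalysisSpectral2022} (their Theorem~2), and both occurrences of it here are stated without argument. Your sketch must therefore be measured against the proof in that cited work, and it reconstructs essentially the same route taken there (and in \cite{pengPartitioningWellClusteredGraphs2017} before it): a structure theorem placing the embedded vertices of each $\sets_i$ near a centre $p_i$, with pairwise separation $\norm{p_i - p_j}_2^2 = \bigomega{1/\vol(\sets_i) + 1/\vol(\sets_j)}$; a pigeonhole argument making $\sigma$ well defined and injective; and a charging argument converting the budget $c \cdot k$ into misclassified volume. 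Your instinct that the degree bookkeeping is the crux is also correct: in the cited work the $k$-means objective is degree-weighted, of the form $\sum_u \deg(u) \norm{\cdot}_2^2$, which is exactly what makes each misplaced unit of volume cost $\bigtheta{k / \vol(\vertexset)}$ and yields a conclusion about $\vol(\cdot)$ rather than a vertex count, whereas the $\cost$ defined in this paper is unweighted over the rows of $\degmhalfneg \matf$; the two coincide only up to degree factors, so the issue you flag is precisely the one the restatement above glosses over.

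Two genuine soft spots remain in your outline. First, you establish concentration (i) and remark that it ``is precisely Lemma~\ref{lem:ms41}''; the implication runs the other way. Lemma~\ref{lem:ms41} only asserts the existence of one cheap partition and bounds nothing about how far embedded vertices sit from the centres; the concentration you need is the stronger structure theorem of \cite{macgregorTighterAnalysisSpectral2022}, of which Lemma~\ref{lem:ms41} is a corollary, and it is that theorem you must prove first. Second, your counting step charges every misplaced vertex $\bigomega{\norm{p_j - p_{\sigma(i)}}_2^2}$, but an individual vertex of $\sets_j$ need not lie near $p_j$, and the empirical centre $\mu_i$ need not lie near $p_{\sigma(i)}$ just because most of the mass of $\seta_i$ does. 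The standard repair is to split each $\sets_j$ into ``good'' vertices (within half the separation radius of $p_j$) and ``bad'' ones whose total volume is bounded via the concentration error, then show each $\mu_i$ can be close to at most one centre and charge only the good misplaced vertices. Your phrase ``discarding the lower-order concentration error'' gestures at exactly this, but it is the part that has to be written out, and it is where the almost-balancedness assumption and the hidden constants interact. With those two repairs, your outline matches the cited proof in all essentials; no step points in a wrong direction.
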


\begin{proof}[Proof of Theorem~\ref{thm:main}]
    By Lemma~\ref{lem:ms41} and Lemma~\ref{lem:kmeanscost}, with probability at least $0.9 - \epsilon$, there exists some partition $\partition{\widehat{\seta}}$ of the vertex set $\vertexset_\graphg$ such that
    \[
        \cost_{\degm^{-1/2}\mat{Y}}(\widehat{\seta}_1, \ldots, \widehat{\seta}_k) = \bigo{(1 + \epsilon) \frac{\epsilon k}{\log(n/\epsilon)} + \epsilon k}.
    \]
    Since we use a constant-factor approximation algorithm for $k$-means, 
    the partition $\partitiona$ returned by Algorithm~\ref{alg:fsc} satisfies
    $
        \cost_{\degm^{-1/2}\mat{Y}}(\seta_1, \ldots, \seta_k) = \bigo{\epsilon k}.
    $
    Then, by Lemma~\ref{lem:ms2} and Lemma~\ref{lem:kmeanscost}, for some permutation $\sigma: [k] \xrightarrow{} [k]$, we have
    \[
        \sum_{i = 1}^k  \vol(\seta_i \symdiff \sets_{\sigma(i)}) = \bigo{\epsilon \cdot \vol(\vertexset_\graphg)}.
    \]
    To bound the running time,
    notice that the number of non-zero entries in $\matm$ is $2 m$, and the time complexity of matrix multiplication is proportional to the number of non-zero entries.
    Therefore, the running time of $\algpowermethod(\matm, \vecx_0, t)$ is $\bigotilde{m}$.
    Since the loop in Algorithm~\ref{alg:fsc} is executed $\bigtheta{\log(k) \cdot \epsilon^{-2}}$ times, the total running time of Algorithm~\ref{alg:fsc}
    is $\bigotilde{m \cdot \epsilon^{-2}} + T_{\mathrm{KM}}(n, k, l)$.
\end{proof}

\section{Experiments} \label{sec:experiments}
In this section, we empirically study several variants of the spectral clustering algorithm.
We compare the following algorithms:
\begin{itemize}
    \item \textsc{$k$ Eigenvectors}: the classical spectral clustering algorithm which uses $k$ eigenvectors of the graph Laplacian matrix to embed the vertices. This is the algorithm analysed in \cite{macgregorTighterAnalysisSpectral2022, pengPartitioningWellClusteredGraphs2017}.
    \item \textsc{$\log(k)$ Eigenvectors}: spectral clustering with $\log(k)$ eigenvectors of the graph Laplacian to embed the vertices.
    \item \textsc{KASP}: the fast spectral clustering algorithm proposed by Yan~\etal~\cite{yan2009fast}. The algorithm proceeds by first coarsening the data with $k$-means before applying spectral clustering.
    \item \textsc{PM $k$ Vectors} (Power Method with $k$ vectors): spectral clustering with $k$ orthogonal vectors computed with the power method. This is the algorithm analysed in \cite{boutsidis2015spectral}.
    \item \textsc{PM $\log(k)$ Vectors}: spectral clustering with $\log(k)$ random vectors computed with the power method. This is Algorithm~\ref{alg:fsc}.
\end{itemize}
We implement all algorithms in Python, using the
\texttt{numpy}~\cite{numpy},
\texttt{scipy}~\cite{scipy},
\texttt{stag}~\cite{stag}, and
\texttt{scikit-learn}~\cite{scikit-learn},
libraries for matrix manipulation, eigenvector computation, graph processing, and $k$-means approximation respectively.
We first compare the performance of the algorithms on synthetic graphs with a range of sizes drawn from the stochastic block model (SBM).
We then study the algorithms' performance on several real-world datasets.
We find that our algorithm is significantly faster than all other spectral clustering algorithm, while maintaining almost the same clustering accuracy.
The most significant improvement is seen on graphs with a large number of clusters.
All experiments are performed on an HP laptop with an 11th Gen Intel(R) Core(TM) i7-11800H @ 2.30GHz processor and 32 GB RAM.
The code to reproduce the experiments is available at 
\url{https://github.com/pmacg/fast-spectral-clustering}.

\subsection{Synthetic Data}
In this section, we evaluate the spectral clustering algorithms on synthetic data drawn from the stochastic block model.
Given parameters $n \in \Z_{\geq 0}$, $k \in \Z_{\geq 0}$, $p \in [0, 1]$, and $q \in [0, 1]$,
we generate a graph $\geqve$ with $n$ vertices and $k$ ground-truth clusters $\sets_1, \ldots, \sets_k$ of size $n/k$.
For any pair of vertices $u \in \sets_i$ and $v \in \sets_j$, we add the edge $\{u, v\}$ with probability $p$ if $i = j$ and with probability $q$ otherwise.
We study the running time of the algorithms in two settings.

In the first experiment, we set $n = 1000 \cdot k$, $p = 0.04$, and $q = 1 / (1000 k)$.
Then, we study the running time of spectral clustering
for different values of $k$.
The results are shown in Figure~\ref{subfig:growk}.
We observe that our newly proposed algorithm is much faster than existing methods for large values of $k$, and our algorithm is easily able to scale to large graphs with several hundred thousand vertices.

In the second experiment, we set $k = 20$, $p = 40 / n$, and $q = 1/(20 n)$.
Then, we study the running time of the spectral clustering algorithms for different values of $n$.
The results are shown in Figure~\ref{subfig:grown}.
Empirically, we find that when $k$ is a fixed constant,
our newly proposed algorithm is faster than existing methods by a constant factor.
In every case, all algorithms successfully recover the ground truth clusters.\footnote{Note that we cannot compare with the \textsc{KASP} algorithm on the stochastic block model since \textsc{KASP} is designed to operate on vector data rather than on graphs.}

\begin{figure}
    \centering
\subfigure[\label{subfig:growk}]{
    \includegraphics[width=0.45\columnwidth]{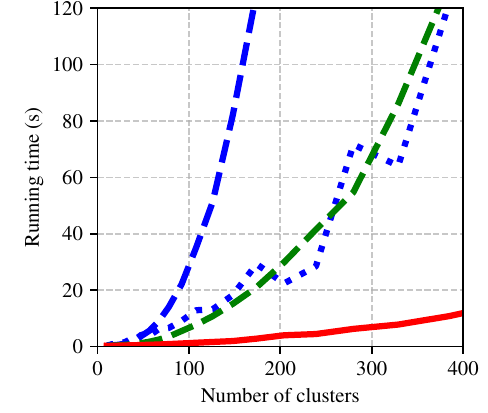}
}
\subfigure[\label{subfig:grown}]{
    \includegraphics[width=0.45\columnwidth]{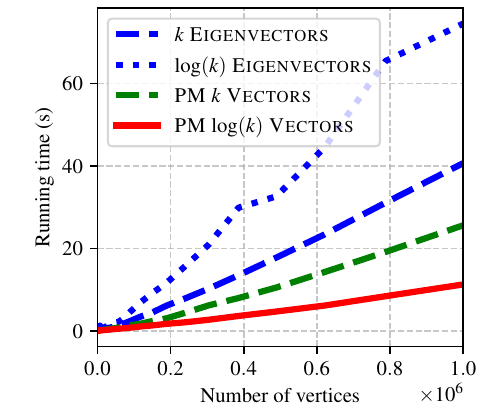}
}
    \caption{The running time of spectral clustering variants on graphs drawn from the stochastic block model.
    (a) Setting $n = 1000 \cdot k$ and increasing the number of clusters, $k$, shows that Algorithm~\ref{alg:fsc} is much faster than alternative methods for large values of $k$.
    (b) Setting $k = 20$ and increasing the number of vertices, $n$, shows that for fixed $k$, Algorithm~\ref{alg:fsc} is faster than the alternatives by a constant factor.
    \label{fig:sbm_result}}
\end{figure}

\subsection{Real-world Data}
In this section, we evaluate spectral clustering on real-world data with labeled ground-truth clusters.
We compare the algorithms on the following datasets from a variety of domains.
\begin{itemize}
    \item \textbf{MNIST}~\cite{mnist}: each data point is an image with $28 \times 28$ greyscale pixels, representing a hand-written digit from $0$ to $9$.
    \item \textbf{Pen Digits}~\cite{pendigits}: data is collected by writing digits on a digital pen tablet. Each data point corresponds to some digit from $0$ to $9$ and consists of $8$ pairs of $(x, y)$ coordinates encoding the sequence of pen positions while the digit was written.
    \item \textbf{Fashion}~\cite{fashion-mnist}: each data point is an image with $28 \times 28$ greyscale pixels, representing one of $10$ classes of fashion item, such as `shoe' or `shirt'.
    \item \textbf{HAR} (Human Activity Recognition)~\cite{hardata}: the dataset consists of pre-processed sensor data from a body-worn smartphone.
    Participants were asked to perform a variety of activities, such as `walking', `walking upstairs', and `standing'. The task is to identify the activity from the sensor data.
    \item \textbf{Letter}~\cite{letter}: each data point corresponds to an upper-case letter from `A' to `Z'. The data was generated from distorted images of letters with a variety of fonts, and the features correspond to various statistics computed on the resulting images.
\end{itemize}

\begin{table}[t]
\centering
\caption{The performance of spectral clustering algorithms on real-world datasets.
The \textsc{PM $\log(k)$} algorithm corresponds to Algorithm~\ref{alg:fsc}.
We perform $10$ trials and report the average performance with one standard deviation of uncertainty.
We observe that Algorithm~\ref{alg:fsc} is consistently very fast when compared to the other algorithms while achieving comparable clustering accuracy.
\label{tab:results}}
\resizebox{\columnwidth}{!}{
\begin{tabular}{ccccccc}
\hline
\multicolumn{1}{l}{}         &                         & \multicolumn{5}{c}{Dataset}                                                                                                           \\ \cline{3-7} 
\multicolumn{1}{l}{}         & Algorithm               & MNIST                    & Pen Digits               & Fashion                  & HAR                      & Letter                    \\ \hline
\multirow{5}{*}{Time}        & \textsc{$k$ Eigs}       & $2.70 \pm 0.24$          & $0.64 \pm 0.07$          & $3.55 \pm 0.17$          & $0.58 \pm 0.07$          & $29.29 \pm 11.85$         \\
                             & \textsc{$\log(k)$ Eigs} & $2.73 \pm 0.20$          & $1.01 \pm 0.05$          & $3.79 \pm 0.11$          & $0.83 \pm 0.05$          & $24.99 \pm 11.58$         \\
                             & \textsc{KASP}           & $15.47 \pm 3.40$         & $\mathbf{0.22 \pm 0.03}$ & $14.33 \pm 3.54$         & $0.91 \pm 0.19$          & $\mathbf{0.33 \pm 0.14}$  \\
                             & \textsc{PM $k$}         & $3.23 \pm 0.15$          & $0.49 \pm 0.02$          & $2.40 \pm 0.07$          & $0.38 \pm 0.01$          & $1.14 \pm 0.02$           \\
                             & \textsc{PM $\log(k)$}   & $\mathbf{1.99 \pm 0.05}$ & $0.36 \pm 0.01$          & $\mathbf{1.19 \pm 0.06}$ & $\mathbf{0.30 \pm 0.02}$ & $\mathbf{0.39 \pm 0.02}$  \\
                             \hline
\multirow{5}{*}{ARI}         & \textsc{$k$ Eigs}       & $\mathbf{0.61 \pm 0.01}$ & $\mathbf{0.58 \pm 0.02}$ & $\mathbf{0.42 \pm 0.00}$ & $\mathbf{0.51 \pm 0.00}$ & $\mathbf{0.17 \pm 0.00}$  \\
                             & \textsc{$\log(k)$ Eigs} & $0.49 \pm 0.03$          & $\mathbf{0.63 \pm 0.06}$ & $0.32 \pm 0.02$          & $0.30 \pm 0.01$          & $\mathbf{0.17 \pm 0.00}$  \\
                             & \textsc{KASP}           & $0.33 \pm 0.03$          & $0.42 \pm 0.04$          & $0.30 \pm 0.03$          & $\mathbf{0.48 \pm 0.03}$ & $0.13 \pm 0.01$           \\
                             & \textsc{PM $k$}         & $0.55 \pm 0.03$          & $\mathbf{0.60 \pm 0.07}$ & $\mathbf{0.40 \pm 0.02}$ & $\mathbf{0.50 \pm 0.02}$ & $\mathbf{0.17 \pm 0.00}$  \\
                             & \textsc{PM $\log(k)$}   & $0.51 \pm 0.04$          & $\mathbf{0.61 \pm 0.05}$ & $0.35 \pm 0.03$          & $\mathbf{0.49 \pm 0.02}$ & $\mathbf{0.17 \pm 0.00}$  \\
                             \hline
\multirow{5}{*}{NMI}         & \textsc{$k$ Eigs}       & $\mathbf{0.74 \pm 0.01}$ & $\mathbf{0.78 \pm 0.00}$ & $\mathbf{0.60 \pm 0.00}$ & $\mathbf{0.72 \pm 0.00}$ & $0.27 \pm 0.02$           \\
                             & \textsc{$\log(k)$ Eigs} & $0.68 \pm 0.01$          & $\mathbf{0.77 \pm 0.03}$ & $0.55 \pm 0.01$          & $0.48 \pm 0.02$          & $0.13 \pm 0.02$           \\
                             & \textsc{KASP}           & $0.48 \pm 0.02$          & $0.60 \pm 0.03$          & $0.46 \pm 0.03$          & $0.61 \pm 0.03$          & $\mathbf{0.35 \pm 0.01}$  \\
                             & \textsc{PM $k$}         & $\mathbf{0.73 \pm 0.03}$ & $\mathbf{0.76 \pm 0.03}$ & $\mathbf{0.61 \pm 0.02}$ & $0.69 \pm 0.02$          & $0.29 \pm 0.03$           \\
                             & \textsc{PM $\log(k)$}   & $0.69 \pm 0.02$          & $\mathbf{0.77 \pm 0.02}$ & $\mathbf{0.55 \pm 0.04}$ & $0.66 \pm 0.04$          & $0.30 \pm 0.01$           \\
                             \hline
\end{tabular}
}
\end{table}

\begin{wraptable}{r}{0.4\textwidth}
\vspace{-1.6em}
    \centering
    \caption{The number of vertices ($n$) and clusters ($k$) in each of the real-world datasets. \label{tab:datasetinfo}}
    \vspace{1em}
    \begin{tabular}{@{}ccc@{}}
    \toprule
Dataset       & $n$     & $k$  \\ \midrule
MNIST         & $70000$ & $10$ \\
Pen Digits    & $7494$  & $10$ \\
Fashion       & $70000$ & $10$ \\
HAR           & $10299$ & $6$  \\
Letter        & $20000$ & $26$ \\ \bottomrule
\end{tabular}
\vspace{-1em}
\end{wraptable}

The datasets are all made available by the OpenML~\cite{OpenML2013} project, and can be downloaded with the \texttt{scikit-learn} library~\cite{scikit-learn}.
We first pre-process each dataset by computing the $k$ nearest neighbour graph from the data, for $k = 10$.
Table~\ref{tab:datasetinfo} shows the number of nodes and the number of ground truth clusters in each dataset.

For each dataset, we report the performance of each spectral clustering algorithm with respect to the running time in seconds,
and the clustering accuracy measured with the Adjusted Rand Index~(ARI)~\cite{ari, randindex} and the Normalised Mutual Information~(NMI)~\cite{nmi}.
Table~\ref{tab:results} summarises the results.

We find that Algorithm~\ref{alg:fsc} 
is consistently very fast when compared to the other spectral clustering algorithms.
Moreover, the clustering accuracy is similar for every algorithm.

\section{Conclusion}
In this paper, we introduced a new fast spectral clustering algorithm based on projecting the vertices
of the graph into $\bigo{\log(k)}$ dimensions with the power method.
We find that the new algorithm is faster than previous spectral clustering algorithms
and achieves similar clustering accuracy.

This algorithm offers a new option for the application of spectral clustering.
If a large running time is acceptable and the goal is to achieve the best accuracy possible, then our experimental results
suggest that the classical spectral clustering algorithm with $k$ eigenvectors is the optimal choice.
On the other hand, when the number of clusters or the number of data points is very large,
our newly proposed method
provides a significantly faster algorithm for a small trade-off in terms of clustering accuracy.
This could allow spectral clustering to be applied in regimes that were previously intractable, such as when $k = \bigtheta{n}$.

\section*{Acknowledgements}
This work is supported by EPSRC Early Career Fellowship (EP/T00729X/1).

\bibliography{references.bib}
\bibliographystyle{plain}

\appendix
\appendix

\section{Omitted detail from Section~\ref{sec:analysis}} \label{app:analysis}
In this section, we prove the main theoretical result of the paper.
In order that this section is self-contained, we repeat some of the steps included in the main paper.
We first show that the lengths of the random vectors $\vecx_i$ 
generated in Algorithm~\ref{alg:fsc}
are close to their expected value.
Notice that $\E\left[\norm{\vecx_i}_2\right] = \sqrt{n}$
and 
$\E\left[\norm{\matp \vecx_i}_2\right] = \sqrt{k}$.
We use Chebyshev's inequality to show the following.

\begin{lemma} \label{lem:pproj}
Let $\vecx \in \R^n$ be drawn from the $n$-dimensional Gaussian distribution $\mathrm{N}(\vec{0}, \identity)$.
Let $\vecf_1, \ldots, \vecf_k \in \R^n$ be orthogonal vectors and let $\matp = \sum_{i = 1}^k \vecf_i \vecf_i^\transpose$ be the projection onto the
space spanned by $\vecf_1, \ldots, \vecf_k$.
With probability at least $1 - (1 / 10 k)$,
\begin{itemize}
    \item $\norm{\matp \vecx}_2 \leq \sqrt{6 k}$, and
    \item $\norm{\vecx}_2 \leq \sqrt{6 n}$.
\end{itemize}

\end{lemma}

\begin{proof} [Proof of Lemma~\ref{lem:pproj}.]
    Since $\vecx$ is drawn from a symmetric $n$-dimensional Gaussian distribution,
    $\norm{\vecx}_2^2$ is distributed according to a $\chi^2$ distribution with $n$ degrees of freedom.
    Similarly, since $\matp$ is a projection matrix, $\norm{\matp \vecx}_2^2$ is distributed according to
    a $\chi^2$ distribution with $k$ degrees of freedom.
    By the Chebyshev inequality, we have that
    \begin{align*}
        \p\left[\norm{\matp \vecx_i}_2^2 \geq 6 k \right]
        & \leq \frac{k}{(5k)^2} = \frac{1}{25 k},
    \end{align*}
    and
    \begin{align*}
        \p\left[\norm{\vecx_i}_2^2 \geq 6 n \right]
        & \leq \frac{n}{(5 n)^2} = \frac{1}{25 n}.
    \end{align*}
    The lemma follows by the union bound and since $k \leq n$.
\end{proof}

We now show that the output of the $\algpowermethod$ algorithm is close to a random vector in the space spanned by $\vecf_1, \ldots, \vecf_k$.

\begingroup
\def\thelemma{\ref{lem:powermethod}}

    \begin{lemma}
        
    \end{lemma}

\addtocounter{lemma}{-1}
\endgroup

\begin{proof}[Proof of Lemma~\ref{lem:powermethod}]
    By the assumptions of the Lemma, we can assume that
    \begin{itemize}
        \item $\gamma_{k+1} \leq c_1 < 1$,
        \item $\gamma_{k} \geq 1 - c_2 \epsilon \log(24 n/\epsilon^2 k)^{-1}$, and
        \item $t = c_3 \log(24 n / \epsilon^2 k)$,
    \end{itemize}
    for constants $c_1$, $c_2$, and $c_3$.
    Fixing $c_1 < 1$, we will set 
    \[
        c_3 = \frac{1}{2 \log\left(\frac{1}{c_1}\right)}
    \]
    and
    \[
        c_2 = \frac{1}{c_3 \cdot 2 \sqrt{6}}.
    \]
    Furthermore, by Lemma~\ref{lem:pproj}, with probability at least $1 - (1/10k)$ it holds that
    \[
        \norm{\matp \vecx_0}_2 \leq \sqrt{6k}
    \]
    and
    \[
        \norm{\vecx_0}_2 \leq \sqrt{6 n},
    \]
    and we assume that this holds in the remainder of the proof.
    
    Now, we write $\vecx_0$ in terms of its expansion in the basis given by the eigenvectors $\vecf_1, \ldots, \vecf_n$:
    \[
        \vecx_0 = \sum_{j = 1}^n a_j \vecf_j,
    \]
    where $a_j = \inner{\vecx_0}{\vecf_j}$.
    Similarly, we have
    \[
        \matp \vecx_0 = \sum_{j = 1}^k a_j \vecf_j
    \]
    and
    \[
        \vecx_t = \sum_{j = 1}^n a_j \gamma_j^t \vecf_j.
    \]
    Then,
    \begin{align*}
        \norm{\vecx_t - \matp \vecx_0}_2
        & = \norm{ \sum_{j = 1}^k \left( a_j \gamma_j^t - a_j\right) \vecf_j + \sum_{j = k + 1}^n a_j \gamma_j^t \vecf_j }_2 \\
        & \leq \norm{\sum_{j = 1}^k \left(a_j \gamma_j^t - a_j\right) \vecf_j}_2 + \norm{\sum_{j = k + 1}^n a_j \gamma_j^t \vecf_j}_2 \\
        & \leq \norm{\left(1 - \gamma_k^t\right) \sum_{j = 1}^k a_j \vecf_j}_2 + \norm{\gamma_{k+1}^t \sum_{j = k + 1}^n a_j \vecf_j}_2 \\
        & = \left(1 - \gamma_k^t\right) \norm{\matp \vecx_0}_2 + \gamma_{k+1}^{t} \norm{\left(\identity - \matp\right) \vecx_0}_2 \\
        & \leq \left(1 - \gamma_k^t\right) \norm{\matp \vecx_0}_2 + \gamma_{k+1}^{t} \norm{\vecx_0}_2
    \end{align*}
    where we used the fact that $1 \geq \gamma_1 \geq \ldots \geq \gamma_n$.
    Now, we have
    \begin{align*}
        \gamma_{k}^t
        & \geq \left(1 - c_2 \epsilon \log(24 n / \epsilon^2 k)^{-1}\right)^{c_3 \log(24 n / \epsilon^2 k)} \\
        & \geq 1 - c_2 c_3 \epsilon  \\
        & = 1 - \frac{\epsilon}{2 \sqrt{6}}.
    \end{align*}
    Furthermore, 
    \begin{align*}
        \gamma_{k+1}^t
        & \leq c_1^{c_3 \log(24 n / \epsilon^2 k)} \\
        & = \left(\frac{1}{c_1}\right)^{c_3 \log(\epsilon^2 k / 24 n)} \\
        & = \left(\frac{\epsilon^2 k}{24 n}\right)^{c_3 \log\left(1 / c_1\right)} \\
        & = \epsilon \sqrt{\frac{k}{24 n}}.
    \end{align*}
    Combining everything together, we have
    \begin{align*}
        \norm{\vecx_t - \matp \vecx_0}_2
        & \leq \frac{\epsilon}{2 \sqrt{6}} \norm{\matp \vecx_0}_2 + \epsilon \sqrt{\frac{k}{24 n}} \norm{\vecx_0}_2 \\
        & \leq \frac{\epsilon}{2 \sqrt{6}} \sqrt{6 k} + \epsilon \sqrt{\frac{6 k n}{24 n}} \\ 
        & \leq \epsilon \sqrt{k},
    \end{align*}
    which completes the proof.
\end{proof}

It remains to prove that the $k$-means cost is preserved in the embedding produced by the power method.
Recall that $\vecf_1, \ldots, \vecf_k$ are the eigenvectors of $\signlapn$ corresponding to the eigenvalues $\gamma_1, \ldots, \gamma_k$
and $\vecy_1, \ldots, \vecy_l$ are the vectors computed in Algorithm~\ref{alg:fsc}.
We will also consider the vectors $\vecz_1, \ldots, \vecz_l$ given by
$
    \vecz_i = \matp \vecx_i,
$
where $\{\vecx_i\}_{i = 1}^k$ are the random vectors sampled in Algorithm~\ref{alg:fsc}, and $\matp = \sum_{i = 1}^k \vecf_i \vecf_i^\transpose$ is the projection onto the
space spanned by $\vecf_1, \ldots, \vecf_k$.
We also define
\[
    \mat{F} =
    \begin{bmatrix}
        \vert & & \vert \\
        \vecf_1 & \ldots & \vecf_k \\
        \vert & & \vert
    \end{bmatrix}
    \mbox{,} \quad
    \mat{Y} =
    \begin{bmatrix}
        \vert & & \vert \\
        \vecy_1 & \ldots & \vecy_l \\
        \vert & & \vert
    \end{bmatrix}
    \quad \mbox{ and } \quad
    \mat{Z} =
    \begin{bmatrix}
        \vert & & \vert \\
        \vecz_1 & \ldots & \vecz_l \\
        \vert & & \vert
    \end{bmatrix}
    .
\]
We will use the following result shown by Makarychev~\etal~\cite{kmeansJL}.
\begingroup
\def\thelemma{\ref{lem:kmeansjl}}
\begin{lemma} [\cite{kmeansJL}, Theorem 1.3]
    Given data $\mat{X} \in \R^{n \times k}$, let $\mat{\Pi} \in \R^{k \times l}$ be a random matrix with each column sampled from the $k$-dimensional Gaussian distribution $\mathrm{N}(\vec{0}, \identity_k)$ and 
    \[
        l = \bigo{\frac{\log(k) + \log(1 / \epsilon)}{\epsilon^2}}.
    \]
    Then, with probability at least $1 - \epsilon$, it holds for all partitions $\partitiona$ of $[n]$ that
    \[
        \cost_{\mat{X}}(\seta_1, \ldots, \seta_k) \in (1 \pm \epsilon) \cost_{\mat{X} \mat{\Pi}}(\seta_1, \ldots, \seta_k).
    \]
\end{lemma}
\addtocounter{lemma}{-1}
\endgroup

Applying this lemma with $\mat{X} = \degmhalfneg \matf$ and $\mat{\Pi} = \matf^\transpose \mat{Z}$ shows that the $k$-means cost is approximately equal in the embeddings given by $\degmhalfneg \matf$ and
$\degmhalfneg \matz$, since $\matf \matf^\transpose \matz = \matz$ and each of the entries of $\matf^\transpose \matz$ is distributed according to the Gaussian distribution $\mathrm{N}(0, 1)$.
By Lemma~\ref{lem:powermethod}, we can also show that the $k$-means objective in $\degmhalfneg \maty$ is within an additive error of $\degmhalfneg \matz$.
This allows us to prove the following lemma.

\begingroup
\def\thelemma{\ref{lem:kmeanscost}}

    \begin{lemma}
        
    \end{lemma}

\addtocounter{lemma}{-1}
\endgroup

In order to prove this, we will use the fact shown by Boutsidis and Magdon-Ismail~\cite{boutsidis2013deterministic} that we can write the $k$-means
cost as
\begin{equation} \label{eq:kmeanscost}
    \cost_{\matb}(\seta_i, \ldots, \seta_k) = \norm{\matb - \matx \matx^\transpose \matb}_F^2
\end{equation}
where $\matx \in \R^{n \times k}$ is the indicator matrix of the partition, defined by
\[
    \matx(u, i) = \twopartdefow{\frac{1}{\sqrt{\cardinality{\seta_i}}}}{u \in \seta_i}{0},
\]
and 
$\fnorm{\matb} \triangleq (\sum_{i, j} \matb_{i, j}^2)^{1/2}$ is the Frobenius norm.

\begin{proof}[Proof of Lemma~\ref{lem:kmeanscost}.]
Notice that $\matf^\transpose \matz \in \R^{k \times l}$ is a random matrix with columns drawn from the standard $k$-dimensional Gaussian distribution.
Then, by Lemma~\ref{lem:kmeansjl}, with probability at least $1 - \epsilon$, we have for any partition $\partitiona$ that
\begin{equation} \label{eq:jl}
    \cost_{\degm^{-1/2} \matf}(\seta_1, \ldots, \seta_k) \in \left(1 \pm \epsilon\right) \cost_{\degm^{-1/2} \matz}(\seta_1, \ldots, \seta_k)
\end{equation}
since $\degm^{-1/2} \matf \matf^\transpose \matz = \degm^{-1/2} \matz$, where we use the fact that the columns of $\matz$ are in the span of $\vecf_1, \ldots, \vecf_k$.

Furthermore, by the union bound, we can assume with probability at least $0.9$ that the conclusion of Lemma~\ref{lem:powermethod} holds for every vector $\vecy_i$ computed
by Algorithm~\ref{alg:fsc}.

Now, we will establish that $\cost_{\degm^{-1/2} \maty}(\cdot)$ is close to $\cost_{\degm^{-1/2} \matz}(\cdot)$ which will complete the proof.
For some arbitrary partition $\partitiona$, let $\matx$ be the indicator matrix of the partition.
Then, we have
\begin{align*}
    \lefteqn{\fnorm{\degm^{-1/2} \maty - \matx \matx^\transpose \degm^{-1/2} \maty} - \fnorm{\degm^{-1/2} \matz - \matx \matx^\transpose \degm^{-1/2} \matz}} \\
    & = \fnorm{\left(\identity - \matx \matx^\transpose\right) \degm^{-1/2} \maty} - \fnorm{\left(\identity - \matx \matx^\transpose\right) \degm^{-1/2} \matz} \\
    & \leq \fnorm{\left(\identity - \matx \matx^\transpose\right) \degm^{-1/2} \left(\maty - \matz\right)} \\
    & \leq \fnorm{\left(\identity - \matx \matx^\transpose\right) \left(\maty - \matz\right)} \\
    & \leq \fnorm{\maty - \matz} \\
    & = \sqrt{\sum_{i = 1}^l \norm{\vecy_i - \vecz_i}_2^2} \\
    & \leq \sqrt{l \epsilon^2 k} \\
    & \leq \epsilon k,
\end{align*}
Where we use Lemma~\ref{lem:powermethod}, and the fact that $l \leq k$.
Combining this with \eqref{eq:jl} completes the proof.
\end{proof}
Now we come to the proof of the main theorem.
\begingroup
\def\thetheorem{\ref{thm:main}}

    \begin{theorem}
        
    \end{theorem}

\addtocounter{theorem}{-1}
\endgroup
To complete the proof, we will make use of the following results proved by Macgregor and Sun~\cite{macgregorTighterAnalysisSpectral2022}, which hold under the same assumptions as Theorem~\ref{thm:main}.

\begingroup
\def\thelemma{\ref{lem:ms41}}
\begin{lemma}[\cite{macgregorTighterAnalysisSpectral2022}, Lemma 4.1]
    There exists a partition $\{\seta_i\}_{i = 1}^k$ of the vertex set $\vertexset$ such that
    \[
        \cost_{\degm^{-1/2}\mat{F}}(\seta_1, \ldots \seta_k) < k \cdot \rho(k)/\lambda_{k+1}.
    \]
\end{lemma}
\addtocounter{lemma}{-1}
\endgroup

\begingroup
\def\thelemma{\ref{lem:ms2}}
\begin{lemma}[\cite{macgregorTighterAnalysisSpectral2022}, Theorem~2]
    Given some partition of the vertices, $\{\seta_i\}_{i = 1}^k$, such that
    \[
        \cost_{\degm^{-1/2}\mat{F}}(\seta_1, \ldots \seta_k) \leq c \cdot k,
    \]
    then there exists a permutation $\sigma: [k] \xrightarrow{} [k]$ such that
    \[
        \sum_{i = 1}^k \vol(\seta_i \symdiff \sets_{\sigma(i)}) = \bigo{c \cdot \vol(\vertexset)}.
    \]
\end{lemma}
\addtocounter{lemma}{-1}
\endgroup

\end{document}


\maketitle

\appendix

\section{Omitted detail from Section~\ref{sec:analysis}} \label{app:analysis}
In this section, we prove the main theoretical result of the paper.
In order that this section is self-contained, we repeat some of the steps included in the main paper.
We first show that the lengths of the random vectors $\vecx_i$ 
generated in Algorithm~\ref{alg:fsc}
are close to their expected value.
Notice that $\E\left[\norm{\vecx_i}_2\right] = \sqrt{n}$
and 
$\E\left[\norm{\matp \vecx_i}_2\right] = \sqrt{k}$.
We use Chebyshev's inequality to show the following.

\begin{lemma} \label{lem:pproj}
Let $\vecx \in \R^n$ be drawn from the $n$-dimensional Gaussian distribution $\mathrm{N}(\vec{0}, \identity)$.
Let $\vecf_1, \ldots, \vecf_k \in \R^n$ be orthogonal vectors and let $\matp = \sum_{i = 1}^k \vecf_i \vecf_i^\transpose$ be the projection onto the
space spanned by $\vecf_1, \ldots, \vecf_k$.
With probability at least $1 - (1 / 10 k)$,
\begin{itemize}
    \item $\norm{\matp \vecx}_2 \leq \sqrt{6 k}$, and
    \item $\norm{\vecx}_2 \leq \sqrt{6 n}$.
\end{itemize}

\end{lemma}

\begin{proof} [Proof of Lemma~\ref{lem:pproj}.]
    Since $\vecx$ is drawn from a symmetric $n$-dimensional Gaussian distribution,
    $\norm{\vecx}_2^2$ is distributed according to a $\chi^2$ distribution with $n$ degrees of freedom.
    Similarly, since $\matp$ is a projection matrix, $\norm{\matp \vecx}_2^2$ is distributed according to
    a $\chi^2$ distribution with $k$ degrees of freedom.
    By the Chebyshev inequality, we have that
    \begin{align*}
        \p\left[\norm{\matp \vecx_i}_2^2 \geq 6 k \right]
        & \leq \frac{k}{(5k)^2} = \frac{1}{25 k},
    \end{align*}
    and
    \begin{align*}
        \p\left[\norm{\vecx_i}_2^2 \geq 6 n \right]
        & \leq \frac{n}{(5 n)^2} = \frac{1}{25 n}.
    \end{align*}
    The lemma follows by the union bound and since $k \leq n$.
\end{proof}




We now show that the output of the $\algpowermethod$ algorithm is close to a random vector in the space spanned by $\vecf_1, \ldots, \vecf_k$.

\begingroup
\def\thelemma{\ref{lem:powermethod}}
\snippetlemma{snippets/lempm}
\addtocounter{lemma}{-1}
\endgroup

\begin{proof}[Proof of Lemma~\ref{lem:powermethod}]
    By the assumptions of the Lemma, we can assume that
    \begin{itemize}
        \item $\gamma_{k+1} \leq c_1 < 1$,
        \item $\gamma_{k} \geq 1 - c_2 \epsilon \log(24 n/\epsilon^2 k)^{-1}$, and
        \item $t = c_3 \log(24 n / \epsilon^2 k)$,
    \end{itemize}
    for constants $c_1$, $c_2$, and $c_3$.
    Fixing $c_1 < 1$, we will set 
    \[
        c_3 = \frac{1}{2 \log\left(\frac{1}{c_1}\right)}
    \]
    and
    \[
        c_2 = \frac{1}{c_3 \cdot 2 \sqrt{6}}.
    \]
    Furthermore, by Lemma~\ref{lem:pproj}, with probability at least $1 - (1/10k)$ it holds that
    \[
        \norm{\matp \vecx_0}_2 \leq \sqrt{6k}
    \]
    and
    \[
        \norm{\vecx_0}_2 \leq \sqrt{6 n},
    \]
    and we assume that this holds in the remainder of the proof.
    
    Now, we write $\vecx_0$ in terms of its expansion in the basis given by the eigenvectors $\vecf_1, \ldots, \vecf_n$:
    \[
        \vecx_0 = \sum_{j = 1}^n a_j \vecf_j,
    \]
    where $a_j = \inner{\vecx_0}{\vecf_j}$.
    Similarly, we have
    \[
        \matp \vecx_0 = \sum_{j = 1}^k a_j \vecf_j
    \]
    and
    \[
        \vecx_t = \sum_{j = 1}^n a_j \gamma_j^t \vecf_j.
    \]
    Then,
    \begin{align*}
        \norm{\vecx_t - \matp \vecx_0}_2
        & = \norm{ \sum_{j = 1}^k \left( a_j \gamma_j^t - a_j\right) \vecf_j + \sum_{j = k + 1}^n a_j \gamma_j^t \vecf_j }_2 \\
        & \leq \norm{\sum_{j = 1}^k \left(a_j \gamma_j^t - a_j\right) \vecf_j}_2 + \norm{\sum_{j = k + 1}^n a_j \gamma_j^t \vecf_j}_2 \\
        & \leq \norm{\left(1 - \gamma_k^t\right) \sum_{j = 1}^k a_j \vecf_j}_2 + \norm{\gamma_{k+1}^t \sum_{j = k + 1}^n a_j \vecf_j}_2 \\
        & = \left(1 - \gamma_k^t\right) \norm{\matp \vecx_0}_2 + \gamma_{k+1}^{t} \norm{\left(\identity - \matp\right) \vecx_0}_2 \\
        & \leq \left(1 - \gamma_k^t\right) \norm{\matp \vecx_0}_2 + \gamma_{k+1}^{t} \norm{\vecx_0}_2
    \end{align*}
    where we used the fact that $1 \geq \gamma_1 \geq \ldots \geq \gamma_n$.
    Now, we have
    \begin{align*}
        \gamma_{k}^t
        & \geq \left(1 - c_2 \epsilon \log(24 n / \epsilon^2 k)^{-1}\right)^{c_3 \log(24 n / \epsilon^2 k)} \\
        & \geq 1 - c_2 c_3 \epsilon  \\
        & = 1 - \frac{\epsilon}{2 \sqrt{6}}.
    \end{align*}
    Furthermore, 
    \begin{align*}
        \gamma_{k+1}^t
        & \leq c_1^{c_3 \log(24 n / \epsilon^2 k)} \\
        & = \left(\frac{1}{c_1}\right)^{c_3 \log(\epsilon^2 k / 24 n)} \\
        & = \left(\frac{\epsilon^2 k}{24 n}\right)^{c_3 \log\left(1 / c_1\right)} \\
        & = \epsilon \sqrt{\frac{k}{24 n}}.
    \end{align*}
    Combining everything together, we have
    \begin{align*}
        \norm{\vecx_t - \matp \vecx_0}_2
        & \leq \frac{\epsilon}{2 \sqrt{6}} \norm{\matp \vecx_0}_2 + \epsilon \sqrt{\frac{k}{24 n}} \norm{\vecx_0}_2 \\
        & \leq \frac{\epsilon}{2 \sqrt{6}} \sqrt{6 k} + \epsilon \sqrt{\frac{6 k n}{24 n}} \\ 
        & \leq \epsilon \sqrt{k},
    \end{align*}
    which completes the proof.
\end{proof}


It remains to prove that the $k$-means cost is preserved in the embedding produced by the power method.
Recall that $\vecf_1, \ldots, \vecf_k$ are the eigenvectors of $\signlapn$ corresponding to the eigenvalues $\gamma_1, \ldots, \gamma_k$
and $\vecy_1, \ldots, \vecy_l$ are the vectors computed in Algorithm~\ref{alg:fsc}.
We will also consider the vectors $\vecz_1, \ldots, \vecz_l$ given by
$
    \vecz_i = \matp \vecx_i,
$
where $\{\vecx_i\}_{i = 1}^k$ are the random vectors sampled in Algorithm~\ref{alg:fsc}, and $\matp = \sum_{i = 1}^k \vecf_i \vecf_i^\transpose$ is the projection onto the
space spanned by $\vecf_1, \ldots, \vecf_k$.
We also define
\[
    \mat{F} =
    \begin{bmatrix}
        \vert & & \vert \\
        \vecf_1 & \ldots & \vecf_k \\
        \vert & & \vert
    \end{bmatrix}
    \mbox{,} \quad
    \mat{Y} =
    \begin{bmatrix}
        \vert & & \vert \\
        \vecy_1 & \ldots & \vecy_l \\
        \vert & & \vert
    \end{bmatrix}
    \quad \mbox{ and } \quad
    \mat{Z} =
    \begin{bmatrix}
        \vert & & \vert \\
        \vecz_1 & \ldots & \vecz_l \\
        \vert & & \vert
    \end{bmatrix}
    .
\]
We will use the following result shown by Makarychev~\etal~\cite{kmeansJL}.
\begingroup
\def\thelemma{\ref{lem:kmeansjl}}
\begin{lemma} [\cite{kmeansJL}, Theorem 1.3]
    Given data $\mat{X} \in \R^{n \times k}$, let $\mat{\Pi} \in \R^{k \times l}$ be a random matrix with each column sampled from the $k$-dimensional Gaussian distribution $\mathrm{N}(\vec{0}, \identity_k)$ and 
    \[
        l = \bigo{\frac{\log(k) + \log(1 / \epsilon)}{\epsilon^2}}.
    \]
    Then, with probability at least $1 - \epsilon$, it holds for all partitions $\partitiona$ of $[n]$ that
    \[
        \cost_{\mat{X}}(\seta_1, \ldots, \seta_k) \in (1 \pm \epsilon) \cost_{\mat{X} \mat{\Pi}}(\seta_1, \ldots, \seta_k).
    \]
\end{lemma}
\addtocounter{lemma}{-1}
\endgroup

Applying this lemma with $\mat{X} = \degmhalfneg \matf$ and $\mat{\Pi} = \matf^\transpose \mat{Z}$ shows that the $k$-means cost is approximately equal in the embeddings given by $\degmhalfneg \matf$ and
$\degmhalfneg \matz$, since $\matf \matf^\transpose \matz = \matz$ and each of the entries of $\matf^\transpose \matz$ is distributed according to the Gaussian distribution $\mathrm{N}(0, 1)$.
By Lemma~\ref{lem:powermethod}, we can also show that the $k$-means objective in $\degmhalfneg \maty$ is within an additive error of $\degmhalfneg \matz$.
This allows us to prove the following lemma.

\begingroup
\def\thelemma{\ref{lem:kmeanscost}}
\snippetlemma{snippets/lemkmeanscost}
\addtocounter{lemma}{-1}
\endgroup

In order to prove this, we will use the fact shown by Boutsidis and Magdon-Ismail~\cite{boutsidis2013deterministic} that we can write the $k$-means
cost as
\begin{equation} \label{eq:kmeanscost}
    \cost_{\matb}(\seta_i, \ldots, \seta_k) = \norm{\matb - \matx \matx^\transpose \matb}_F^2
\end{equation}
where $\matx \in \R^{n \times k}$ is the indicator matrix of the partition, defined by
\[
    \matx(u, i) = \twopartdefow{\frac{1}{\sqrt{\cardinality{\seta_i}}}}{u \in \seta_i}{0},
\]
and 
$\fnorm{\matb} \triangleq (\sum_{i, j} \matb_{i, j}^2)^{1/2}$ is the Frobenius norm.

\begin{proof}[Proof of Lemma~\ref{lem:kmeanscost}.]
Notice that $\matf^\transpose \matz \in \R^{k \times l}$ is a random matrix with columns drawn from the standard $k$-dimensional Gaussian distribution.
Then, by Lemma~\ref{lem:kmeansjl}, with probability at least $1 - \epsilon$, we have for any partition $\partitiona$ that
\begin{equation} \label{eq:jl}
    \cost_{\degm^{-1/2} \matf}(\seta_1, \ldots, \seta_k) \in \left(1 \pm \epsilon\right) \cost_{\degm^{-1/2} \matz}(\seta_1, \ldots, \seta_k)
\end{equation}
since $\degm^{-1/2} \matf \matf^\transpose \matz = \degm^{-1/2} \matz$, where we use the fact that the columns of $\matz$ are in the span of $\vecf_1, \ldots, \vecf_k$.

Furthermore, by the union bound, we can assume with probability at least $0.9$ that the conclusion of Lemma~\ref{lem:powermethod} holds for every vector $\vecy_i$ computed
by Algorithm~\ref{alg:fsc}.

Now, we will establish that $\cost_{\degm^{-1/2} \maty}(\cdot)$ is close to $\cost_{\degm^{-1/2} \matz}(\cdot)$ which will complete the proof.
For some arbitrary partition $\partitiona$, let $\matx$ be the indicator matrix of the partition.
Then, we have
\begin{align*}
    \lefteqn{\fnorm{\degm^{-1/2} \maty - \matx \matx^\transpose \degm^{-1/2} \maty} - \fnorm{\degm^{-1/2} \matz - \matx \matx^\transpose \degm^{-1/2} \matz}} \\
    & = \fnorm{\left(\identity - \matx \matx^\transpose\right) \degm^{-1/2} \maty} - \fnorm{\left(\identity - \matx \matx^\transpose\right) \degm^{-1/2} \matz} \\
    & \leq \fnorm{\left(\identity - \matx \matx^\transpose\right) \degm^{-1/2} \left(\maty - \matz\right)} \\
    & \leq \fnorm{\left(\identity - \matx \matx^\transpose\right) \left(\maty - \matz\right)} \\
    & \leq \fnorm{\maty - \matz} \\
    & = \sqrt{\sum_{i = 1}^l \norm{\vecy_i - \vecz_i}_2^2} \\
    & \leq \sqrt{l \epsilon^2 k} \\
    & \leq \epsilon k,
\end{align*}
Where we use Lemma~\ref{lem:powermethod}, and the fact that $l \leq k$.
Combining this with \eqref{eq:jl} completes the proof.
\end{proof}
Now we come to the proof of the main theorem.
\begingroup
\def\thetheorem{\ref{thm:main}}
\snippettheorem{snippets/mainthm}
\addtocounter{theorem}{-1}
\endgroup
To complete the proof, we will make use of the following results proved by Macgregor and Sun~\cite{macgregorTighterAnalysisSpectral2022}, which hold under the same assumptions as Theorem~\ref{thm:main}.

\begingroup
\def\thelemma{\ref{lem:ms41}}
\begin{lemma}[\cite{macgregorTighterAnalysisSpectral2022}, Lemma 4.1]
    There exists a partition $\{\seta_i\}_{i = 1}^k$ of the vertex set $\vertexset$ such that
    \[
        \cost_{\degm^{-1/2}\mat{F}}(\seta_1, \ldots \seta_k) < k \cdot \rho(k)/\lambda_{k+1}.
    \]
\end{lemma}
\addtocounter{lemma}{-1}
\endgroup

\begingroup
\def\thelemma{\ref{lem:ms2}}
\begin{lemma}[\cite{macgregorTighterAnalysisSpectral2022}, Theorem~2]
    Given some partition of the vertices, $\{\seta_i\}_{i = 1}^k$, such that
    \[
        \cost_{\degm^{-1/2}\mat{F}}(\seta_1, \ldots \seta_k) \leq c \cdot k,
    \]
    then there exists a permutation $\sigma: [k] \xrightarrow{} [k]$ such that
    \[
        \sum_{i = 1}^k \vol(\seta_i \symdiff \sets_{\sigma(i)}) = \bigo{c \cdot \vol(\vertexset)}.
    \]
\end{lemma}
\addtocounter{lemma}{-1}
\endgroup
\begin{proof}[Proof of Theorem~\ref{thm:main}]
    By Lemma~\ref{lem:ms41} and Lemma~\ref{lem:kmeanscost}, with probability at least $0.9 - \epsilon$, there exists some partition $\partition{\widehat{\seta}}$ of the vertex set $\vertexset_\graphg$ such that
    \[
        \cost_{\degm^{-1/2}\mat{Y}}(\widehat{\seta}_1, \ldots, \widehat{\seta}_k) = \bigo{(1 + \epsilon) \frac{\epsilon k}{\log(n/\epsilon)} + \epsilon k}.
    \]
    Since we use a constant-factor approximation algorithm for $k$-means, 
    the partition $\partitiona$ returned by Algorithm~\ref{alg:fsc} satisfies
    $
        \cost_{\degm^{-1/2}\mat{Y}}(\seta_1, \ldots, \seta_k) = \bigo{\epsilon k}.
    $
    Then, by Lemma~\ref{lem:ms2} and Lemma~\ref{lem:kmeanscost}, for some permutation $\sigma: [k] \xrightarrow{} [k]$, we have
    \[
        \sum_{i = 1}^k  \vol(\seta_i \symdiff \sets_{\sigma(i)}) = \bigo{\epsilon \cdot \vol(\vertexset_\graphg)}.
    \]
    To bound the running time,
    notice that the number of non-zero entries in $\matm$ is $2 m$, and the time complexity of matrix multiplication is proportional to the number of non-zero entries.
    Therefore, the running time of $\algpowermethod(\matm, \vecx_0, t)$ is $\bigotilde{m}$.
    Since the loop in Algorithm~\ref{alg:fsc} is executed $\bigtheta{\log(k) \cdot \epsilon^{-2}}$ times, the total running time of Algorithm~\ref{alg:fsc}
    is $\bigotilde{m \cdot \epsilon^{-2}} + T_{\mathrm{KM}}(n, k, l)$.
\end{proof}

\bibliographystyle{plain}